\setlist{nolistsep}
\newtheorem{theorem}{Theorem}
\newtheorem{lemma}[theorem]{Lemma}
\newtheorem{proposition}[theorem]{Proposition}
\def\model{{\scshape NegGen}\xspace}
\newtcolorbox[list inside=prompt]{prompt}[1][]{
    colbacktitle=black!60,
    coltitle=white,
    fontupper=\small,
    boxsep=1pt,
    left=0pt,
    right=0pt,
    top=1pt,
    bottom=0pt,
    boxrule=1pt,
    #1,
}
\newcommand{\ignore}[1]{}
\newcommand{\ie}{\emph{i.e., }}
\begin{document}
\title{Generating Negative Samples for Multi-Modal Recommendation}

\settopmatter{authorsperrow=3}
\author{Yanbiao Ji}
  \affiliation{%
  \institution{Shanghai Jiao Tong University}
  \city{Shanghai}
  \country{China}
}
\email{jiyanbiao@sjtu.edu.cn}

\author{Dan Luo}
  \affiliation{%
  \institution{Lehigh University}
  \city{Bethlehem}
  \state{PA}
  \country{USA}
}
\email{danluo.ir@gmail.com}

\author{Chang Liu}
  \affiliation{%
  \institution{Shanghai Jiao Tong University}
  \city{Shanghai}
  \country{China}
}
\email{isonomialiu@sjtu.edu.cn}

\author{Shaokai Wu}
  \affiliation{%
  \institution{Shanghai Jiao Tong University}
  \city{Shanghai}
  \country{China}
}
\email{shaokai.wu@sjtu.edu.cn}

\author{Jing Tong}
  \affiliation{%
  \institution{Shanghai Jiao Tong University}
  \city{Shanghai}
  \country{China}
}
\email{tj_19_hf@sjtu.edu.cn}

\author{Qicheng He}
  \affiliation{%
  \institution{Shanghai Jiao Tong University}
  \city{Shanghai}
  \country{China}
}
\email{ayombeach@sjtu.edu.cn}

\author{Deyi Ji}
  \affiliation{%
  \institution{Tencent}
  \city{Beijing}
  \country{China}
}
\email{deyiji@tencent.com}

\author{Hongtao Lu}
  \affiliation{%
  \institution{Shanghai Jiao Tong University}
  \city{Shanghai}
  \country{China}
}
\email{htlu@sjtu.edu.cn}

\author{Yue Ding}
\authornote{Corresponding author.}
  \affiliation{%
  \institution{Shanghai Jiao Tong University}
  \city{Shanghai}
  \country{China}
}
\email{dingyue@sjtu.edu.cn}

\renewcommand{\shortauthors}{Yanbiao Ji et al.}

\begin{abstract}
Multi-modal recommender systems (MMRS) have gained significant attention due to their ability to leverage information from various modalities to enhance recommendation quality. However, existing negative sampling techniques often struggle to effectively utilize the multi-modal data, leading to suboptimal performance. In this paper, we identify two key challenges in negative sampling for MMRS: (1) producing cohesive negative samples contrasting with positive samples and (2) maintaining a balanced influence across different modalities. To address these challenges, we propose \textbf{\model}, a novel framework that utilizes multi-modal large language models (MLLMs) to generate balanced and contrastive negative samples. We design three different prompt templates to enable \model to analyze and manipulate item attributes across multiple modalities, and then generate negative samples that introduce better supervision signals and ensure modality balance. Furthermore, \model employs a causal learning module to disentangle the effect of intervened key features and irrelevant item attributes, enabling fine-grained learning of user preferences. Extensive experiments on real-world datasets demonstrate the superior performance of \model compared to state-of-the-art methods in both negative sampling and multi-modal recommendation. 
\end{abstract}



\begin{CCSXML}
<ccs2012>
   <concept>
       <concept_id>10002951.10003317.10003347.10003350</concept_id>
       <concept_desc>Information systems~Recommender systems</concept_desc>
       <concept_significance>500</concept_significance>
       </concept>
 </ccs2012>
\end{CCSXML}

\ccsdesc[500]{Information systems~Recommender systems}

\keywords{Multi-Modal Recommendation, Negative Sampling, Large Language Models}



\maketitle

\vspace{-8pt}
\section{Introduction}
\label{intro}
Recommendation systems (RS) are the core component in online platforms for providing users with personalized content~\cite{recsys}.
Meanwhile, the amount of multimedia content on online platforms, including text, images, and videos, has rapidly grown~\cite{image1, text1, mmrs1, mmrs2, mmrs3}.
In this context, \textit{multi-modal recommender systems} (MMRS) have emerged with the aim of integrating multi-modal inputs with user historical behavior data to better understand user preferences and enhance recommendation quality.


Bayesian Personalized Ranking (BPR)~\cite{bpr} is a widely adopted approach for training personalized recommender models. It learns informative user and item representations that rank positive items above negative ones. Therefore, effective negative sampling strategies play an important role in optimizing these RS, which should not only accelerate convergence, but also improve model performance~\cite{dnsmn}. 
Existing negative sampling techniques can be categorized into two types~\cite{negrec}: \textit{negative item sampling} and \textit{negative item generation}. 
The former samples negatives from the item pool. For example, uniform sampling selects random uninteracted items for efficiency~\cite{bpr}.
Hard negative sampling~\cite{hard1, hard2} picks challenging candidates with large gradients to provide richer feedback and faster convergence~\cite{DBLP:journals/tois/ChenJWZFCEH21}.
The latter generates semantically meaningful and challenging negatives.
GAN-based~\cite{gan1, gan2} and diffusion-based methods~\cite{diffusion} produce hard-to-distinguish negatives, offering finer-grained supervision than item sampling.


\begin{figure}
    \centering
    \includegraphics[width=0.95\linewidth]{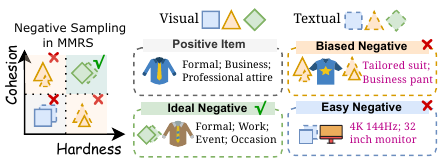}
    \caption{Illustration of effective negative sampling in MMRS. ``Cohesion'' measures the semantic contrasting to positive samples
    and ``hardness'' represents the extent of difficulty for the RS in distinguishing between positive and negative samples.
    An ideal negative sample should be both highly cohesive and sufficiently hard.}
    \label{fig:negs}
    \vspace{-15pt}
\end{figure}

Despite the effectiveness of existing negative sampling techniques, they often fall short in the context of MMRS. 
ID-based negative sampling methods~\cite{mixgcf, dens} are limited to item-level sampling strategies, and generative negative sampling methods~\cite{irgan} may encounter model collapse issues~\cite{collapse}.
We conduct an in-depth analysis of the limitations of negative sampling in MMRS and highlight our key insights as follows:
(1) \textbf{Negative samples exhibit inadequate contrast across multiple modalities}. 
We believe that the ideal negative samples in MMRS should be \textit{contrastive}, \ie exhibiting both cohesion and sufficient hardness for the RS, as illustrated in Figure~\ref{fig:negs}. 
Cohesion refers to the ability of negative samples to provide informative supervision signals that improve the performance of the model. Negative samples with high cohesion can help the model learn better representations of items~\cite{neg_proof2, diffusion}.
Hardness, on the other hand, is the similarity between negative and positive samples, \ie the difficulty of distinguishing them. Harder samples generate larger gradients, thus making the model converge faster~\cite{aobpr}.
Through an empirical study in Section~\ref{sec:pre-gen}, we demonstrate that in MMRS, if low-quality negative samples are generated in a straightforward manner without considering their cohesion and hardness, the performance can be even worse than simply using random sampling. 
(2) \textbf{Negative samples contribute unevenly to the model's learning across different modalities.} 
In Section~\ref{sec:pre-imbalance}, we investigate the learning process of a representative MMRS model FREEDOM~\cite{freedom} with negative sampling in different modalities. Our observation is that the textual modality tends to dominate the learning process over the visual modality.
This imbalance occurs because existing negative sampling strategies tend to cause multi-modal recommendation models to overfit to the easier modality (\ie text) while ignoring the more challenging ones~\cite{imbalance}. Consequently, the model fails to fully utilize the information available across all modalities, leading to a decline in performance.



Based on this, the core question to address the challenges of negative sampling for MMRS is:
\textit{How can we generate sufficiently contrastive negative samples while maintaining modality balance?}
Achieving this requires a strategy capable of adapting to diverse user preferences and comprehending complex inter-modal relationships. This naturally aligns with recent advancements in multi-modal large language models (MLLMs), which have shown  promising capabilities in understanding multi-modal inputs and generating diverse and context-aware content~\cite{table, qwen, minicpmv, mllm}. 
To this end, we propose \model, a novel framework that generates high-quality negative samples for MMRS. Unlike existing generative methods that rely on training GANs~\cite{gan} or diffusion models~\cite{diffusion}, \model leverages pretrained MLLM to create informative contrastive examples. These examples serve as challenging negative samples that enhance the learning process. 
However, pretrained MLLMs are trained on general datasets and designed for general tasks. They struggle to generate contrastive enough negative samples for recommendation tasks (Section~\ref{sec:pre-gen}). To bridge this gap, \model employs a series of tasks to generate coherent and hard negative samples: (1) Description Generation, which aggregates item attributes across modalities; (2) Attribute Masking, which identifies and masks key features of items; and (3) Attribute Completion, which replaces the masked attributes with generated alternatives. 
In these tasks we condition the generation process with the attributes of positive items across all modalities, mitigating the dominance of certain easier modalities.
Therefore \model ensures that the generated negative samples are modality-balanced and highly informative. Furthermore, \model incorporates a causal learning module to disentangle the effect of intervened key features and irrelevant item attributes, enabling a fine-grained learning of user preferences.

The main contributions of this paper are summarized as follows:
\begin{itemize}[leftmargin=*]
\item We provide an analysis of the challenges associated with negative sampling in MMRS. To the best of our knowledge, this is the first work that analyzes the limitations of existing methods under the complexities of multi-modal data.
\item To address the unique challenges of negative sampling in MMRS, we propose \model, a novel framework that leverages the capabilities of multi-modal large language models to generate semantically rich and informative negative samples.
\item We validate the effectiveness of \model through extensive experiments on multiple real-world datasets, demonstrating its superior performance over state-of-the-art methods. Notably, \model outperforms both advanced negative sampling methods and multi-modal recommender systems.
\end{itemize}
\section{Preliminary}
\label{sec:limitations}

\begin{figure}[t]
    \centering
    \includegraphics[width=0.85\linewidth]{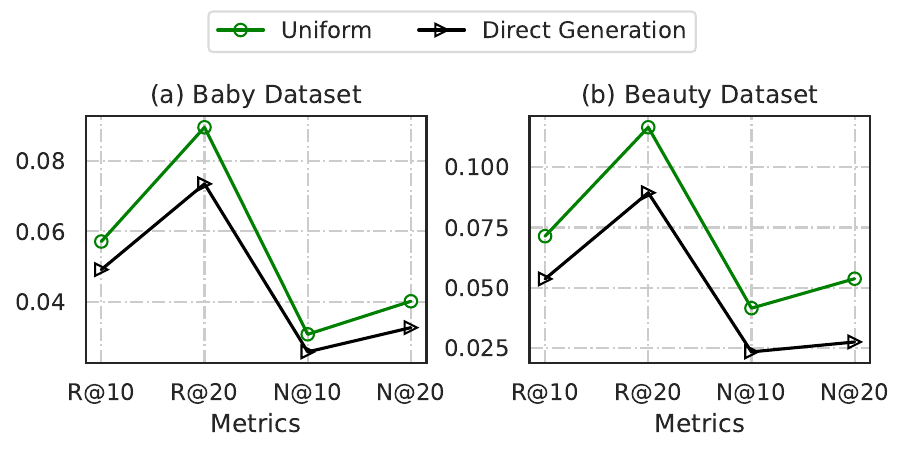}
    \caption{Recall (R) and NDCG (N) on Baby and Beauty datasets using uniform sampling and MLLM generated negative samples. }
    \vspace{-10pt}
    \label{fig:direct}
\end{figure}
\subsection{Naive Negative Sampling for MMRS}
\label{sec:pre-gen}
In this subsection we illustrate the performance of MMRS with a naive negative sampling strategy.
We choose the advanced MMRS FREEDOM~\cite{freedom} as the base recommender, and test on Amazon Baby and Beauty datasets~\footnote{\url{https://cseweb.ucsd.edu/~jmcauley/datasets.html\#amazon_reviews}}.
We use default parameter settings of FREEDOM for training. 
In particular, we leverage state-of-the-art MLLM Llama 3.2-11B-Vision\footnote{\url{https://huggingface.co/meta-llama/Llama-3.2-11B-Vision}} to generate negative samples with a structured prompt template, as presented below. 
Then, we optimize FREEDOM model using positives and generated negatives, and adopt Recall and Normalized Discounted Cumulative Gain (NDCG)~\cite{ndcg} as metrics to evaluate the performance of Top-$K$ ($K=10,20$) recommendations.
The results are illustrated in Figure~\ref{fig:direct}.  
\begin{prompt}[title=Prompt: Direct Generation]
    \textbf{Task Overview:} Given the attributes of a positive item, generate contrasting negative attributes following the same structure as the positive item. The generated attributes should be semantically similar but different from the positive item.
    
    \textbf{Input:} \textcolor{blue}{\{Positive Attributes\}}
    
    \textbf{Output:} \textcolor{blue}{\{Negative Attributes\}}
\end{prompt}

Our results reveal that training MMRS with negatives directly generated by MLLMs performs even worse than using a uniform sampling approach. 
A possible reason is that MLLMs are pre-trained on public data and general scenarios, and the negative samples generated in the specific context of recommendations lack sufficient contrast, which hinders model training.

\subsection{Imbalance of Different Modalities}

We examine the influence of different modalities on the performance of MMRS. 
We use the same model, datasets, and parameter settings as Section~\ref{sec:pre-gen}.
Specifically, we compare the performance of using a single modality (visual or textual) with that of multiple modalities.
Table~\ref{tab:pilot} presents the results. 
\label{sec:pre-imbalance}
\begin{table}[t]
    \centering
    \caption{Performance of FREEDOM with varying modalities. R and N indicate Recall and NDCG. Best results are bolded.}
    \label{tab:pilot}
    \small
    \renewcommand{\arraystretch}{0.8}
    \resizebox*{.43\textwidth}{!}{
        \begin{tabular}{@{}cccccc@{}}
            \toprule
            Datasets & Variants & R@10 & R@20 & N@10 & N@20 \\
            \midrule
            \multirow{3}{*}{Baby} & Visual\&Textual & 0.0614 & 0.0973 & 0.0306 & 0.0398 \\
            & Textual only & \textbf{0.0643} & \textbf{0.0991} & \textbf{0.0323} & {0.0411} \\
            & Visual only & 0.0574 & 0.0899 & 0.0314 & \textbf{0.0416} \\
            \midrule
            \multirow{3}{*}{Beauty} & Visual\&Textual & 0.0792& \textbf{0.1256} & 0.0455 & 0.0584 \\
            & Textual only & \textbf{0.0801} & {0.1254} & \textbf{0.0458} & \textbf{0.0597} \\
            & Visual only & 0.0763 & 0.1239 & 0.0439 & 0.0576 \\
            \bottomrule
        \end{tabular}
        }
        \vspace{-8pt}
\end{table}
An interesting finding is that using only the visual modality yields the worst performance, while using only the text modality achieves the best results, even outperforming the use of both modalities together.
To further reveal the impact of data from different modalities on model performance, we analyze it from the perspective of gradients.

The lower bound of NDCG is influenced by the embeddings of negative samples according to the precious study~\cite{ahns}:
\begin{lemma}
 Given a user $u$, the lower bound of $\text{NDCG}(u)$ can be expressed as:
 \begin{equation}
 \label{eq:bound}
 \setlength{\abovedisplayskip}{3pt}
     \text{NDCG}(u)\ge \frac{1}{|\mathcal{I}_u|}\sum_{i \in \mathcal{I}_u} \frac{1}{1+ exp(\mathbf{e}_u^T \mathbf{e}_i^* - \mathbf{e}_u^T\mathbf{e}_i)}, 
     \setlength{\belowdisplayskip}{3pt}
 \end{equation}
\end{lemma}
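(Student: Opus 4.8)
The plan is to derive the bound directly from the definition of NDCG, progressively converting the rank-based discounted gain of each relevant item into a pairwise comparison between its user--item score and that of a competing negative item, so that the sigmoid summand emerges naturally.

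First I would write $\text{NDCG}(u) = \text{DCG}(u)/\text{IDCG}(u)$, where under binary relevance $\text{DCG}(u) = \sum_{i \in \mathcal{I}_u} 1/\log_2(1+r_i)$ with $r_i$ the rank of positive item $i$ in the predicted list, and $\text{IDCG}(u)$ is the ideal gain obtained by placing all $|\mathcal{I}_u|$ relevant items at the top. Writing $s_{uj} = \mathbf{e}_u^T\mathbf{e}_j$ for the score, the rank is $r_i = 1 + \sum_{j \ne i}\mathbf{1}[s_{uj} > s_{ui}]$. I would then lower-bound each discounted gain: since $\log_2(1+r) \le r$ for $r \ge 1$, we have $1/\log_2(1+r_i) \ge 1/r_i$, reducing the problem to controlling the reciprocal rank of each positive.

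Second, I would replace the hard indicator $\mathbf{1}[s_{uj}>s_{ui}]$ by its smooth surrogate $\sigma(s_{uj}-s_{ui}) = 1/(1+\exp(s_{ui}-s_{uj}))$, interpreting it as the probability that $j$ outranks $i$ under a Bradley--Terry ranking model, so that the expected reciprocal rank is governed by pairwise win probabilities. I would then collapse the competition onto the single hardest negative $\mathbf{e}_i^*$, defined as the uninteracted item maximizing $s_{uj}$: because $s_{u i^*}\ge s_{uj}$ for every negative $j$, beating $i^*$ implies beating all negatives, so the reciprocal rank of $i$ is bounded below by $\sigma(s_{ui}-s_{u i^*}) = 1/(1+\exp(\mathbf{e}_u^T\mathbf{e}_i^* - \mathbf{e}_u^T\mathbf{e}_i))$, which is exactly the summand in~\eqref{eq:bound}. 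Summing over the $|\mathcal{I}_u|$ relevant items and folding the $\text{IDCG}(u)$ normalization into the leading $1/|\mathcal{I}_u|$ factor then yields the claimed inequality.

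The hard part will be rigorously justifying the two surrogates while keeping every inequality in the correct direction. The first is passing from the logarithmic rank discount to the pairwise sigmoid: this seems to require either committing to a stochastic ranking model at the outset, so that $\sigma$ is a genuine marginal win probability rather than an ad hoc smoothing, or a worst-case bound that does not accidentally reverse the inequality. The second, and I expect the more delicate, is reducing the full sum over negatives to the single term $\mathbf{e}_i^*$ while simultaneously handling the presence of other positive items in the ranking and the $\text{IDCG}$ normalization; one must argue that the hardest negative genuinely dominates the reciprocal rank (via monotonicity of $\sigma$ together with the maximality of $s_{u i^*}$) and that the remaining relevant items, which occupy top positions in the ideal ranking, do not erode the bound. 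The cleanest route is likely to fix the probabilistic ranking model first and then exploit $s_{u i^*} \ge s_{uj}$ to obtain the single-term lower bound with the normalization absorbed into $1/|\mathcal{I}_u|$.
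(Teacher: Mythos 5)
Your plan founders at the ``collapse onto the single hardest negative'' step, and no choice of surrogate can repair it, because the statement you end up trying to prove --- full-ranking NDCG lower-bounded by a sigmoid involving only the hardest negative --- is false. Concretely (using your notation $s_{uj}=\mathbf{e}_u^T\mathbf{e}_j$), let every positive $i\in\mathcal{I}_u$ be outscored by $N$ negatives whose scores are $s_{ui}+\epsilon$ for small $\epsilon>0$. Then $r_i\ge N+1$, so each discounted gain $1/\log_2(1+r_i)\le 1/\log_2(N+2)\to 0$ and $\text{NDCG}(u)\to 0$, while each summand on the right-hand side is $\sigma(s_{ui}-s_{ui^*})=\sigma(-\epsilon)\approx 1/2$. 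The logical error is that beating $i^*$ implies beating every negative, but \emph{losing} to $i^*$ says nothing about how many negatives you lose to: monotonicity only gives $\sigma(s_{ui}-s_{ui^*})\le\sigma(s_{ui}-s_{uj})$ for every negative $j$, i.e.\ the hardest negative minimizes the pairwise sigmoid, whereas the reciprocal rank is controlled by the \emph{number} of negatives that beat $i$, which the single worst margin cannot bound. The Bradley--Terry smoothing does not rescue this either: $\mathbf{1}[s_{uj}>s_{ui}]$ is \emph{not} dominated by $\sigma(s_{uj}-s_{ui})$ (the indicator equals $1$ exactly where the sigmoid is below $1$), so the substitution goes the wrong way for upper-bounding the rank, and passing to expected reciprocal ranks would at best prove a statement about expected NDCG under a stochastic ranking model, not the deterministic claim.

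For what it is worth, the paper itself does not prove this lemma; it imports it from the AHNS paper~\cite{ahns}, and the paper's wording that $\mathbf{e}_i^*$ is the \emph{corresponding} negative item embedding of item $i$ is the key to why the bound is true there: the NDCG in question is the sampled (pairwise) metric in which each positive $i$ is ranked only against its single paired negative $i^*$, as in BPR training --- not against the whole catalogue. In that setting the proof is a two-case check: $r_i=1+\mathbf{1}[s_{ui^*}\ge s_{ui}]\in\{1,2\}$ and the ideal DCG equals $|\mathcal{I}_u|$; if $r_i=1$ the gain is $1\ge\sigma(s_{ui}-s_{ui^*})$, and if $r_i=2$ then $s_{ui^*}\ge s_{ui}$ forces $\sigma(s_{ui}-s_{ui^*})\le 1/2<1/\log_2 3$, the gain at rank $2$. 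Summing over $i\in\mathcal{I}_u$ and dividing by $|\mathcal{I}_u|$ gives exactly the claimed inequality. Your opening reductions ($\log_2(1+r)\le r$ for $r\ge 1$ and $\mathrm{IDCG}\le|\mathcal{I}_u|$, both in the correct direction) are sound and would finish the job in this pairwise setting; the gap is entirely in trying to make the bound survive a ranking over all negatives.
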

\noindent
where $\mathcal{I}_u$ represents the set of items interacted by user $u$, $e_u$ is the embedding of $u$, and $e_i$ and $e_i^*$ are the positive and corresponding negative item embeddings of item $i$, respectively.
We further investigate this lower bound using the gradients of negative samples under the BPR training paradigm.
\begin{proposition}
\label{bound}
    Negative samples with smaller gradient magnitudes can achieve a higher lower bound on NDCG.
\end{proposition}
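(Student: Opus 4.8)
The plan is to make the connection between the NDCG lower bound in Equation~\eqref{eq:bound} and the BPR gradient explicit by observing that both are governed by the \emph{same} scalar margin $m_i := \mathbf{e}_u^T\mathbf{e}_i - \mathbf{e}_u^T\mathbf{e}_i^*$ between the positive and negative scores. First I would write down the per-triple BPR objective for a fixed user $u$ and interacted item $i$, namely $\ell_i = -\ln\sigma(m_i)$ with $\sigma(z)=1/(1+e^{-z})$, and note that the $i$-th summand in the bound is exactly $\sigma(m_i)$, since $1/(1+\exp(\mathbf{e}_u^T\mathbf{e}_i^*-\mathbf{e}_u^T\mathbf{e}_i))=\sigma(m_i)$.

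Next I would differentiate $\ell_i$ with respect to the negative embedding $\mathbf{e}_i^*$. Using $\sigma'(z)=\sigma(z)(1-\sigma(z))$ together with $\partial m_i/\partial \mathbf{e}_i^*=-\mathbf{e}_u$, the chain rule gives $\nabla_{\mathbf{e}_i^*}\ell_i = (1-\sigma(m_i))\,\mathbf{e}_u$, so the gradient magnitude is $\|\nabla_{\mathbf{e}_i^*}\ell_i\| = (1-\sigma(m_i))\,\|\mathbf{e}_u\|$. Because $\|\mathbf{e}_u\|$ is constant across the items of a fixed user, this magnitude is a strictly increasing function of $1-\sigma(m_i)$, equivalently a strictly decreasing function of $\sigma(m_i)$.

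Finally I would invert this relationship to rewrite each bound term as $\sigma(m_i) = 1 - \|\nabla_{\mathbf{e}_i^*}\ell_i\|/\|\mathbf{e}_u\|$ and substitute into Equation~\eqref{eq:bound}, yielding
\[
\text{NDCG}(u) \ge 1 - \frac{1}{|\mathcal{I}_u|\,\|\mathbf{e}_u\|}\sum_{i\in\mathcal{I}_u}\|\nabla_{\mathbf{e}_i^*}\ell_i\|.
\]
This exhibits the lower bound as an explicitly decreasing function of the aggregate negative-sample gradient magnitude, which is precisely the claim: smaller gradients raise the achievable bound.

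The main obstacle is not the calculus but pinning down the correct identifications: I must verify that the BPR margin appearing in the gradient is the same $\mathbf{e}_u^T\mathbf{e}_i - \mathbf{e}_u^T\mathbf{e}_i^*$ that appears (with flipped sign inside the exponential) in the bound, and I must be careful that the $(1-\sigma)$ factor from the loss derivative relates to the $\sigma$ term in the bound as one-minus-the-other rather than as the same quantity. A secondary point worth stating is the role of $\|\mathbf{e}_u\|$: the monotone correspondence between gradient magnitude and bound term holds per user because $\|\mathbf{e}_u\|$ is fixed, so the statement should be read as comparing candidate negatives for a common user (or under a norm-normalization assumption), and I would flag this caveat rather than claim a fully embedding-independent equivalence.
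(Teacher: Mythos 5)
Your proposal is correct and follows essentially the same route as the paper's own proof: compute the BPR gradient with respect to the negative embedding to get $\|\nabla\| = (1-\sigma(\mathbf{e}_u^T(\mathbf{e}_i-\mathbf{e}_i^*)))\|\mathbf{e}_u\|$, invert to express $\sigma(\cdot) = 1 - \|\nabla\|/\|\mathbf{e}_u\|$, and substitute into the lemma's NDCG lower bound so that smaller gradient magnitudes yield a larger bound. Your added caveat about $\|\mathbf{e}_u\|$ being fixed per user is a reasonable clarification the paper leaves implicit, but the substance of the argument is identical.
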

\begin{proof}
    Given a user $u$ , let $\mathcal{I}_u$ be the set of items that user $u$ has interacted with, $\mathbf{e}_i$ be the embedding of item $i$, $\mathbf{e}_i^*$ be the embedding of the negative sample, and $\mathbf{e}_u$ be the embedding of user $u$. The BPR loss can be defined as:
\begin{align}
    \mathcal{L}_{\text{BPR}} &= -\sum_{i \in \mathcal{I}_u} \log \sigma(\mathbf{e}_u^T (\mathbf{e}_i - \mathbf{e}_i^*)),
\end{align}
\noindent
where $\sigma$ is the sigmoid function. The magnitude of the gradient of $\mathcal{L}_{\text{BPR}}$ with respect to $\mathbf{e}_i^*$ is:
\begin{align}
    \|\nabla\|= \left \|\frac{\partial \mathcal{L}_{\text{BPR}}}{\partial \mathbf{e}_i^*}\right \| = \|\mathbf{e}_u\|(1-\sigma(\mathbf{e}_u^T (\mathbf{e}_i - \mathbf{e}_i^*))).
\end{align}
Therefore, we have:
\begin{align}
    \label{eq:grad}
    \sigma(\mathbf{e}_u^T (\mathbf{e}_i - \mathbf{e}_i^*)) = 1 - \frac{\|\nabla\|}{\|\mathbf{e}_u\|}.
\end{align}
Combining Equation~\ref{eq:bound} and Equation~\ref{eq:grad}, we obtain:
 \begin{equation}
     \setlength{\abovedisplayskip}{3pt}
     \text{NDCG}(u)\ge \frac{1}{|\mathcal{I}_u|}\sum_{i \in \mathcal{I}_u} (1 - \frac{\|\nabla\|}{\|\mathbf{e}_u\|}). 
    \setlength{\belowdisplayskip}{3pt}
 \end{equation}
 With a smaller gradient, this lower bound becomes higher.
\end{proof}

\begin{figure}[t]
    \centering
    \includegraphics[width=0.85\linewidth]{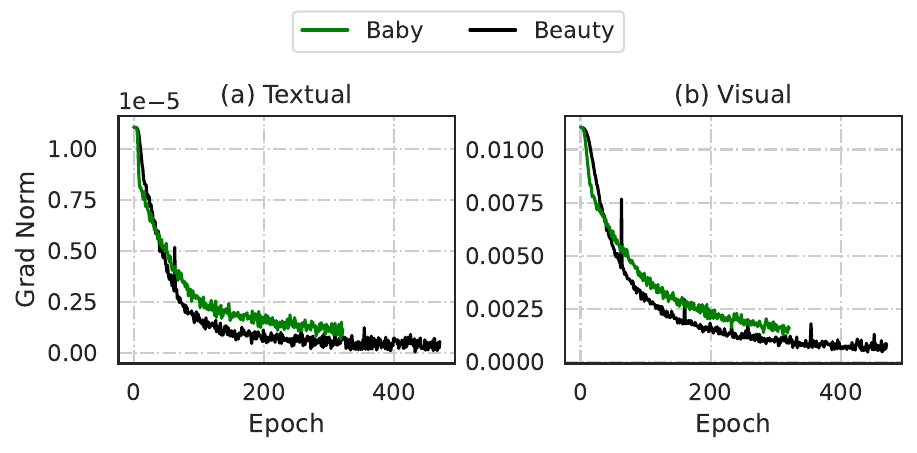}
    \caption{Gradient magnitudes of negative sample modalities over training epochs on Baby and Beauty. Textual gradients are consistently smaller than visual ones.
    }
    \vspace{-10pt}
    \label{fig:gradients}
\end{figure}


Based on Proposition~\ref{bound}, the superior performance of textual unimodal learning may stem from smaller gradient magnitudes of textual negatives. To verify this, we visualize average gradient magnitudes across modalities,  as shown in Figure~\ref{fig:gradients}. We observe that the textual modality yields smaller gradients, indicating stronger discriminative signals. This could be attributed to the rich semantics in text, which makes negative samples easier to distinguish. However, such imbalance leads the model to overly rely on certain modalities while overlooking others.



\section{Proposed Method}
In this section, we present our framework \model. 
The overall architecture is shown in Figure~\ref{fig:framework}. 
\begin{figure*}[h]
    \centering
    \includegraphics[width=0.78\textwidth]{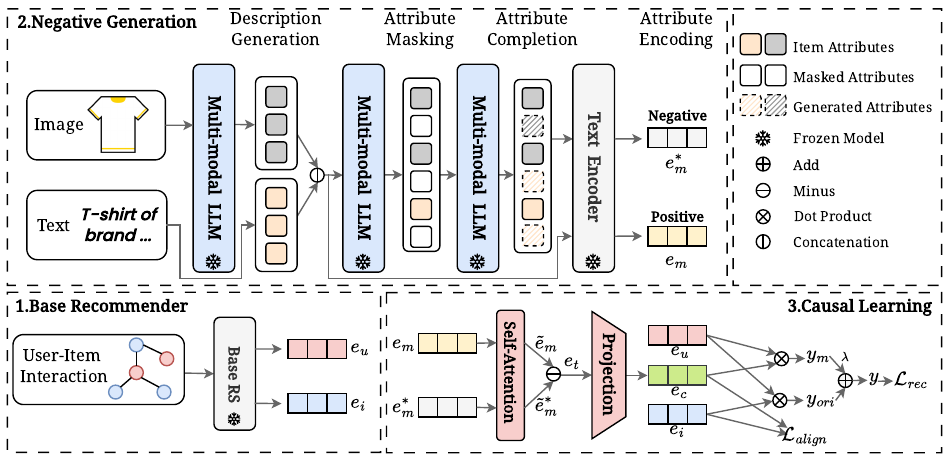}
    \caption{The overall architecture of \model comprises three main components: (1) Base Recommender Training, which captures collaborative filtering signals from user-item interactions, (2) Negative Sample Generation, which produces contrasting item attributes via MLLM, and (3) Causal Learning, which models causal relationships between multi-modal item characteristics and recommendation outcomes.}
    \label{fig:framework}
    \vspace{-8pt}
\end{figure*}

\vspace{-8pt}
\subsection{Base Recommender Training}
We first train a base recommender using user-item interactions.
The base model learns a shared embedding space for users and items to support subsequent recommendations.
Following previous works~\cite{mcln, gnn4, slmrec}, we employ LightGCN~\cite{lightgcn}, a widely applied graph-based collaborative filtering method as our base recommender. 
LightGCN leverages a simplified graph convolutional network to propagate information between connected nodes, which can be formally defined as follows:
\begin{equation}
\setlength{\abovedisplayskip}{3pt}
    \mathbf{e}_j^{(l+1)} = \sum_{k \in \mathcal{N}_j} \frac{1}{\sqrt{|\mathcal{N}_j| \cdot |\mathcal{N}_k|}} \mathbf{e}_k^{(l)},
    \setlength{\belowdisplayskip}{3pt}
\end{equation}
\noindent
where $\mathbf{e}_j^{(l+1)}$ and $\mathbf{e}_k^{(l)}$ represent the embeddings of nodes $j$ at layer $l+1$ and node $k$ at layer $l$, and $\mathcal{N}_j$ and $\mathcal{N}_k$ denote the neighbor sets of nodes $j$ and $k$, respectively. 
The final embeddings for user $u$ and item $i$ are derived by averaging the embeddings across all layers:
\begin{equation}
\setlength{\abovedisplayskip}{3pt}
    \mathbf{e}_u = \frac{1}{L+1} \sum_{l=0}^{L} \mathbf{e}_u^{(l)}, \quad \mathbf{e}_i = \frac{1}{L+1} \sum_{l=0}^{L} \mathbf{e}_i^{(l)},
    \setlength{\belowdisplayskip}{3pt}
\end{equation}
\noindent
where $L$ denotes the total number of graph convolutional layers. These aggregated embeddings effectively encode multi-hop collaborative signals for recommendation tasks.

\vspace{-8pt}
\subsection{Negative Sample Generation}
Rather than directly using MLLMs for generation, our negative sample generation module guides the process through a series of tasks to ensure the informativeness of negatives, thereby avoiding low-contrast samples as discussed in Section~\ref{sec:limitations}.
Specifically, it involves four sequential steps: Description Generation, Attribute Masking, Attribute Completion, and Attribute Encoding.


\subsubsection{Description Generation}
We extract multi-modal attributes by generating a natural language description of each item using an MLLM, which analyzes visual content to produce detailed textual descriptions. This effectively integrates underrepresented visual information into user preference learning, addressing modality imbalance. The generated visual description is then combined with existing textual metadata (\ie, brand name, product title) to form a comprehensive set of multi-modal attributes for each item.

\begin{prompt}[title={Prompt: Description Generation}, label=prompt:image]
    \textbf{Task Overview:}
    You are a descriptive writer who excels at capturing the essence and details of items in clear language. Generate natural, detailed description of the item shown in the given image.
    
    \textbf{Input:}
    \textcolor{blue}{\{Item Image\}} 

    \textbf{Output:}
    \textcolor{blue}{\{Item Description\}}
\vspace{-1pt}
    
\end{prompt}

\subsubsection{Attribute Masking}
A critical step in our method is generating templates for negative sampling. Given an item's multi-modal description, we use an MLLM to identify and mask key descriptive elements, producing partially complete attributes with [MASK] tokens while preserving the original structure and context. These masked descriptions serve as templates to guide the generation of semantic yet distinct negative samples. This ensures that negatives remain contrastive—related to but meaningfully different from their positive counterparts through contrasting attributes.

\vspace{-8pt}
\begin{prompt}[title={Prompt: Attribute Masking}, label=prompt:mask]
    \textbf{Task Overview:}
    Transform the given item description by masking key feature words with [MASK].
    
    \textbf{Instructions:}
    \begin{itemize}[leftmargin=*]
        \item Analyze the given item description.
        \item Identify most significant words that represent (1) Core features (2) Distinctive characteristics (3) Key specifications
        \item Replace these words with [MASK].
        \item Only output the masked description.
    \end{itemize}

    \textbf{Input:}
    \textcolor{blue}{\{Item Description\}}    

    \textbf{Output:}
    \textcolor{blue}{\{Masked Description\}}
\end{prompt}

\subsubsection{Attribute Completion}
Building upon the masked templates, we use MLLM to generate challenging negative samples by replacing [MASK] with appropriate alternative words. 
The negative samples generated can provide better contrastive signals for the training of recommender, 
thus enhancing its ability to capture fine-grained preferences and make more accurate recommendations.

\begin{prompt}[title={Prompt: Attribute Completion}, label=prompt:unmask]
    \textbf{Task Overview:}
    Complete the masked product description by filling in the missing words marked with [MASK].
    
    \textbf{Instructions:}
    \begin{itemize}[leftmargin=*]
        \item Analyze the given masked product description.
        \item Identify the possible words that can be used to complete the masked description.
        \item Replace the [MASK] tokens with the appropriate words.
        \item Ensure that the completed description is coherent and meaningful.
        \item Only the completed description is output.
    \end{itemize}

    \textbf{Input:}
    \textcolor{blue}{\{Masked Description\}}    

    \textbf{Output:}
    \textcolor{blue}{\{Generated Description\}}
\end{prompt}

\subsubsection{Attribute Encoding}
To capture the semantic representations of both original and generated attributes, we employ a pre-trained text encoder that maps the textual attributes into a shared vector space. The encoder, denoted as $\text{Encoder}(\cdot)$, takes the attributes of both visual and textual modalities as input and outputs dense vector representations. This process can be formally expressed as:
\begin{align}
\setlength{\belowdisplayskip}{3pt}
    \mathbf{e}_m &= \text{Encoder}(\{\text{Original Attributes}\}), \\
    \mathbf{e}_m^* &= \text{Encoder}(\{\text{Generated Attributes}\}).
\setlength{\belowdisplayskip}{3pt}
\end{align}
Here, $\mathbf{e}_m$ and $\mathbf{e}_m^*$ represent the encoded vector for the original item description and the generated negative sample, respectively.

\subsection{Causal Learning}
Though the above steps enable the generation of contrastive negative samples, the model inevitably suffers from spurious correlation, \ie the false connection between recommendation results and irrelevant attributes~\cite{spur1, spur2}.
To mitigate this spurious correlation, we propose a causal learning framework for \model. Specifically, the framework captures the causal effect of multi-modal attributes on the score prediction by estimating the total causal effect between positive and negative samples.


\subsubsection{Causal Graph Analysis}
We present the structural causal model~(SCM) for MMRS in Figure~\ref{fig:causal}, which consists of five variables: visual attributes of items ($V$), textual attributes of items ($T$), combined multi-modal representations ($M$), user representations ($U$), and the final prediction score ($Y$). The structural equations can be defined as:
\begin{align}
    M_{v,t} &= m = f_{M}(V=v, T=t), \\
    Y_{m,u} &= Y_{u,v,t} = f_{Y}(M=m, U=u) = f_{Y}(M=f_{M}(v, t), u),
\end{align}
\noindent
where uppercase letters denote random variables (\ie $M$) and lowercase letters represent their specific values (\ie $m$).
\begin{figure}[t]
    \centering
    \includegraphics[width=0.25\textwidth]{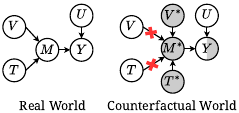}
    \caption{Causal graph illustrating the relationships between multi-modal features and recommendation outcomes.}
    \vspace{-8pt}
    \label{fig:causal}
\end{figure}

Following the do-calculus framework~\cite{scm}, we examine the causal effect of multi-modal attributes on the score prediction by measuring changes in the outcome variable $Y$ when intervening on the variables $V$ and $T$. When these variables change from $(v,t)$ to $(v^*,t^*)$, representing the transition from positive to generated negative items, the total causal effect (TE) is expressed as:
\begin{equation}
\begin{aligned}
\text{TE} &= \mathbb{E}[Y|do(V=v, T=t)] - \mathbb{E}[Y|do(V=v^*, T=t^*)] \\
&= f_{Y}(f_{M}(v, t), u) - f_{Y}(f_{M}(v^*, t^*), u).
\end{aligned}
\label{eq:te}
\end{equation}

\subsubsection{Causal Learning Module}
To instantiate the total causal effect estimation, we introduce the causal learning module, which 
implements the aforementioned causal framework through a structured comparison of positive and negative samples. To effectively capture the complex relationships between item attributes and user preferences, we introduce a self-attention mechanism that computes contextualized representations of the multi-modal embeddings:
\begin{align}
    \mathbf{Q} &= \mathbf{W}_q \mathbf{e}_m, \quad
    \mathbf{K} = \mathbf{W}_k \mathbf{e}_m, \quad
    \mathbf{V} = \mathbf{W}_v \mathbf{e}_m, \\
    \mathbf{\tilde{e}}_m &= \text{Self-Attention}(\mathbf{e}_m) = \text{softmax}(\frac{\mathbf{Q}\mathbf{K}^{\top}}{\sqrt{d}})\mathbf{V},
\end{align}
\noindent
where $d$ represents the embedding dimension and $\mathbf{W}_q$, $\mathbf{W}_k$, $\mathbf{W}_v$ are learnable parameters. The self-attention mechanism enables the model to attend to different aspects of the item's features, capturing their relative importance and reduce the impact of irrelevant attributes~\cite{attn}. For consistency, we apply identical transformations to the negative sample features:
\begin{equation}
    \mathbf{\tilde{e}}_m^* = \text{Self-Attention}(\mathbf{e}_m^*).
\end{equation}
Then we instantiate the total causal effect in Equation~\ref{eq:te}:
\begin{equation}
    \mathbf{e}_{t} = \mathbf{\tilde{e}}_m - \mathbf{\tilde{e}}_m^*.
\end{equation}
\noindent
This formulation serves two key purposes: (1) it captures fine-grained preference distinctions by explicitly modeling the contrast between positive and negative samples, and (2) it reduces the impact of spurious factors by focusing on causal relationships between important item attributes and user preferences~\cite{causal}.

To align the causal effect embedding $\mathbf{e}_{t}$ with the dimension of the base recommender embeddings, we project it into the same space using a projection network:
\begin{equation}
    \mathbf{e}_c = ReLU(\mathbf{e}_{t} \mathbf{W}_1 + \mathbf{b}_1) \mathbf{W}_2 + \mathbf{b}_2,
\end{equation}
\noindent
where $\mathbf{W}_1$ and $\mathbf{W}_2$ are learnable weight matrices, and $\mathbf{b}_1$ and $\mathbf{b}_2$ are learnable biases. Similarly, the negative embedding $\mathbf{\tilde{e}}_m^*$ is projected into the same space:
\begin{equation}
    \mathbf{{e}}_c^* = ReLU(\mathbf{\tilde{e}}_{m}^* \mathbf{W}_1 + \mathbf{b}_1) \mathbf{W}_2 + \mathbf{b}_2.
\end{equation}

\subsection{Training Objective}
To optimize \model, we define two complementary objectives: a recommendation loss $\mathcal{L}_{rec}$ to ensure the recommendation quality, and a multi-modal alignment loss $\mathcal{L}_{align}$ to enhance the consistency between item features and their multi-modal causal representations. This subsection elaborates on the details of these objectives.

First, \model combines item scores predicted from collaborative embeddings and multi-modal causal representation. The causal score is computed as the inner product of the user embedding $e_u$ and the causal effect embedding $e_c$:
$
    y_m = \mathbf{e}_u^{\top} \mathbf{e}_c.
$
\noindent
In parallel, the base recommendation score, which captures the collaborative filtering signal, is calculated as:
    $y_{ori} = \mathbf{e}_u^{\top} \mathbf{e}_i,$
where $\mathbf{e}_i$ is the item embedding learned by the pre-trained recommender.
The final prediction score is obtained by linearly combining the causal effect score and the base recommendation score:
$
    y = y_{ori} + \lambda y_m.
$
Here, $\lambda$ is a hyper-parameter that controls the contribution of the causal effect score relative to the base score.

Having obtained the score prediction, we can calculate the recommendation loss $\mathcal{L}_{rec}$. We adopt the Bayesian Personalized Ranking (BPR) loss explicitly modeling a ranking objective:
\begin{equation}
    \mathcal{L}_{rec} = \mathbb{E}_{(u, i) \sim \mathcal{D}} \left[ -\log \sigma(   \lambda \mathbf{e}_u^{\top} \mathbf{e}_c + \mathbf{e}_u^{\top} \mathbf{e}_i - \mathbf{e}_u^{\top} \mathbf{e}_c^*) \right],
\end{equation}
\noindent
where $\mathcal{D}$ denotes the set of training samples $(u, i)$, with $e_c$ as the positive embedding of item $i$ and $e_c^*$ as the corresponding negative embedding. The term $\sigma(\cdot)$ represents the sigmoid function. This loss encourages the model to rank positive items higher than negative ones for each user $u$.

In addition to the ranking loss, we introduce a multi-modal alignment loss to ensure that the causal embeddings derived from multi-modal features are well-aligned with the collaborative embeddings. This is achieved using a contrastive learning objective:
\begin{equation}
    \mathcal{L}_{align} = \mathbb{E}_{i \in \mathcal{D}} \left[- \log \frac{\exp(\mathbf{e}_c^{\top} \mathbf{e}_i / \tau)}{ \exp((\mathbf{e}_c^*)^{\top} \mathbf{e}_i / \tau)} \right],
\end{equation}
where $\mathcal{D}$ represents the set of items, $\mathbf{e}_i$ is the collaborative embedding of item $i$, $\mathbf{e}_c$ and $\mathbf{e}_c^*$ are the corresponding positive and negative multi-modal embedding, respectively, and $\tau$ is a temperature hyper-parameter that controls the sharpness of the logits distribution~\cite{contrastive}. This loss encourages closer alignment between multi-modal embeddings and their corresponding item embeddings while pushing apart negative samples.

The overall objective function integrates both losses, allowing the model to balance collaborative filtering signals with multi-modal alignment effects:
\begin{equation}
    \mathcal{L} = \mathcal{L}_{rec} + \alpha \mathcal{L}_{align}.
\end{equation}
Here, $\alpha$ is a hyper-parameter that adjusts the relative importance of the alignment objective. By optimizing this joint objective, the model effectively leverages both collaborative and multi-modal information to improve recommendation performance.

\begin{table*}[ht]
    \centering
    \caption{Overall performance achieved by different recommendation methods in terms of Recall and NDCG. The best performance is highlighted in bold and the second best is \underline{underlined}. $\Delta Improv.$ indicates relative improvements over the second best method. \textit{p-val} denotes the p-value from paired t-tests comparing \model against the best baseline.}
    \label{tab:overall}
    \renewcommand{\arraystretch}{0.83}
    \resizebox*{\textwidth}{!}{
\begin{tabular}{@{}ccccccccccccccccc@{}}
\toprule
\multirow{2.5}{*}{Model}                 & \multicolumn{4}{c}{Baby} & \multicolumn{4}{c}{Beauty} & \multicolumn{4}{c}{Clothing}  & \multicolumn{4}{c}{Sports}     \\ 
\cmidrule(l){2-5} \cmidrule(l){6-9} \cmidrule(l){10-13} \cmidrule(l){14-17} 
                       & R@10 & R@20 & N@10 & N@20 & R@10 & R@20 & N@10 & N@20 & R@10 & R@20 & N@10 & N@20 & R@10 & R@20 & N@10 & N@20  \\ \cmidrule(r){1-17}
BPR & 0.0359 & 0.0572 & 0.0193 & 0.0244 & 0.0568 & 0.0953 & 0.0274 & 0.0311 & 0.0207 & 0.0302 & 0.0114 & 0.0137 &  0.0431 & 0.0650 & 0.0238 & 0.0297  \\                      
LightGCN & 0.0458 & 0.0698 & 0.0236 & 0.0309 &0.0676 & 0.1041 & 0.0403 & 0.0519  & 0.0328 & 0.0504 & 0.0184 & 0.0226 & 0.0547 & 0.0819 & 0.0305 & 0.0381\\
\hdashline
IRGAN&0.0441&0.0673&0.0268&0.0322&0.0569&0.0951&0.0379&0.0484&0.0312&0.0481&0.0186&0.0245&0.0477&0.0715&0.0322&0.0390\\
MixGCF&0.0456&0.0691&0.0281&0.0341&0.0617&0.0978&0.0406&0.0557&0.0321&0.0492&0.0199&0.0277&0.0529&0.0773&0.0339&0.0435\\
DENS&0.0451&0.0708&0.0287&0.0350&0.0629&0.0988&0.0415&0.0568&0.0326&0.0501&0.0204&0.0282&0.0537&0.0784&0.0343&0.0446\\
DNS(M,N)&0.0457&0.0725&0.0290&0.0358&0.0628&0.0986&0.0417&0.0566&0.0328&0.0510&0.0211&0.0286&0.0540&0.0785&0.0337&0.0431\\
AHNS&0.0465&0.0737&0.0292&0.0361&0.0635&0.0992&0.0426&0.0572&0.0335&0.0528&0.0237&0.0291&0.0554&0.0851&0.0352&0.0462\\
 \hdashline
VBPR & 0.0412 & 0.0637 & 0.0203 & 0.0251 & 0.0637 & 0.1002 & 0.0371 & 0.0471 & 0.0283 & 0.0411 & 0.0156 & 0.0189 & 0.0558 & 0.0853 & 0.0298 & 0.0375\\
MMGCN & 0.0441 & 0.0685 & 0.0192 & 0.0264 & 0.0628 & 0.1005 & 0.0341 & 0.0459 & 0.0227 & 0.0360 & 0.0121 & 0.0154 & 0.0380 & 0.0631 & 0.0201 & 0.0272\\
BM3 & 0.0542 & 0.0873 & 0.0296 & 0.0365 & 0.0728 & 0.1187 & 0.0425 & 0.0567 & 0.0451 & 0.0672 & 0.0242 & 0.0297 & 0.0663 & 0.0981 & 0.0354 & 0.0438\\
FREEDOM & 0.0614 & 0.0973 & 0.0306 & 0.0398 & 0.0771 & 0.1247 & 0.0441 & 0.0579 & 0.0617 & 0.0904 & 0.0317 & 0.0409 & 0.0714 & 0.1073 & 0.0368 & 0.0457\\
DRAGON & \underline{0.0643} & 0.0996 & 0.0325 & 0.0412 & \underline{0.0792} & \underline{0.1256} & \underline{0.0455} & \underline{0.0584} & 0.0621 & \underline{0.0915} & \underline{0.0322} & \underline{0.0418} & \underline{0.0732} & \underline{0.1085} & \underline{0.0389} & \underline{0.0482}\\
DiffMM & 0.0625 & 0.0971 & 0.0306 & 0.0401 & 0.0783 & 0.1249 & 0.0447 & 0.0579 & \underline{0.0622} & 0.0907 & 0.0313 & 0.0397 & 0.0715 & 0.1043 & 0.0350 & 0.0431\\
LGMRec & 0.0631 & \underline{0.1002} & \underline{0.0327} & \underline{0.0422} & 0.0781 & 0.1243 & 0.0448 & 0.0578 & 0.0531 & 0.0817 & 0.0296 & 0.0364 & 0.0723 & 0.1066 & 0.0385 & 0.0477\\
\hdashline
\model & \textbf{0.0701} & \textbf{0.1065} & \textbf{0.0342} & \textbf{0.0438} & \textbf{0.0823} & \textbf{0.1285} & \textbf{0.0473} & \textbf{0.0604} & \textbf{0.0654} & \textbf{0.0961} & \textbf{0.0350} & \textbf{0.0441} & \textbf{0.0763} & \textbf{0.1114} & \textbf{0.0411} & \textbf{0.0506} \\
\rowcolor{gray!20}  $\Delta Improv.$ & 9.02\% & 6.29\% & 7.95\% & 4.59\% & 3.91\% & 2.31\% & 3.96\% & 3.42\% & 5.14\% & 5.03\% & 8.70\% & 5.50\% & 4.23\% & 3.96\% & 5.66\% & 4.98\%\\ 
\textit{p-val} & 0.0093 & 0.0080 & 0.0028 & 0.0263 & 0.0221 & 0.0101 & 0.0017 & 0.0003 & 0.0098 & 0.0068 & 0.0423 & 0.0062 & 0.0044 & 0.0026 & 0.0079 & 0.0045\\ 
\bottomrule
\end{tabular}}
\end{table*}
\section{Experiments}
In this section, we conduct extensive experiments to demonstrate the effectiveness of \model. 
In general, we expect
the experimental results to answer the following research questions:
\textbf{RQ1}: How does \model perform compared with state-of-the-art MMRS methods and negative sampling methods?
\textbf{RQ2}: Can \model effectively utilize item information from different modalities?
\textbf{RQ3}: Can \model produce high-quality negative samples that accelerate convergence and improve the performance of the recommender?
\textbf{RQ4}: How do the individual components of our model contribute to its performance across different datasets?
\textbf{RQ5}: How does different choices of hyper-parameters affect the performance of our model?

\subsection{Experimental Settings}

\subsubsection{Datasets}
We evaluate our approach on four distinct categories from the Amazon review dataset~\cite{amazon}: Baby, Beauty, Clothing, and Sports. Each dataset represents a different domain of consumer behavior and purchasing patterns. To ensure data quality and meaningful user-item interactions, we use the 5-core filtered datasets, retaining only users and items with at least 5 interactions, following the practices in~\cite{vbpr, freedom, dragon}. Each item in the datasets is associated with rich multi-modal information, including product images and metadata (\ie title, category and brand). The statistics of the datasets are shown in Table~\ref{tab:dataset}.

\begin{table}[h]
    \centering
    \vspace{-8pt}
    \caption{Statistics of the datasets.}
    \label{tab:dataset}
    \renewcommand{\arraystretch}{0.8}
    \begin{tabular}{ccccc}
        \toprule
        Dataset & \#Users & \#Items & \#Interactions & Density \\
        \midrule
        Baby & 19,445 & 7,050 & 139,110 & 0.00101 \\
        Beauty & 22,363 & 12,101 & 172,188 & 0.00064 \\
        Clothing & 39,387 & 23,033 & 278,677 & 0.00031 \\
        Sports & 33,598 & 18,357 & 296,337 & 0.00048 \\
        \bottomrule
    \end{tabular}
    \vspace{-8pt}
\end{table}

\subsubsection{Evaluation Protocols}
Following previous works~\cite{eval1, fettle}, we adopt the 80-10-10 split protocol for training, validation, and testing. Two widely used Top-$K$ metrics, \ie Recall~(R@$K$) and NDCG~(N@$K$), are used to evaluate the quality of recommendation. We report the average values of all users in the test set with $K=10, 20$.

\subsubsection{Baseline Models}
We compare our method with the following three groups of baseline methods.
\noindent \textit{(1) Collaborative Filtering Baselines.}
\textbf{BPR}~\cite{bpr} learns user preferences by ranking interacted items higher than uninteracted ones. \textbf{LightGCN}~\cite{lightgcn} simplifies graph convolution networks by removing non-linear activation and feature transformation.
\noindent \textit{(2) Negative Sampling Baselines}.
\textbf{IRGAN}~\cite{irgan} uses a minimax game to optimize generative and discriminative networks. \textbf{MixGCF}~\cite{mixgcf} synthesizes hard negatives with neighborhood embeddings. \textbf{DENS}~\cite{dens} uses factor-aware sampling to identify the best negatives. \textbf{DNS(M,N)}~\cite{dnsmn} controls negative sampling hardness with hyperparameters. \textbf{AHNS}~\cite{ahns} selects negatives with adaptive hardnesses during training.
\noindent(3) Multi-Modal Recommender Systems.
\textbf{VBPR}~\cite{vbpr} incorporates visual information into BPR embeddings. \textbf{MMGCN}~\cite{gnn4} performs message passing in each modality. \textbf{BM3}~\cite{bm3} creates contrastive views via dropout. \textbf{FREEDOM}~\cite{freedom} denoises user-item graph during training. \textbf{DRAGON}~\cite{dragon} learns dual representations on heterogeneous and homogeneous graphs. \textbf{DiffMM}~\cite{diffmm} integrates a modality-aware graph diffusion model. \textbf{LGMRec}~\cite{lgmrec} proposes hypergraph embedding module.


\subsubsection{Implementation Details}
For all models, we set the embedding dimension to 64 for both users and items, following~\cite{dragon, lgmrec}. We initialize the model parameters using the Xavier method~\cite{xavier} and optimize using Adam~\cite{adam} with fixed batch size of 2048. The training process runs for a maximum of 1000 epochs, with early stopping applied after 20 epochs without improvement. We use Recall@20 on the validation set as our stopping criterion, consistent with~\cite{dragon, freedom}. In \model, we use Llama 3.2-11B-Vision model as the MLLM. For a fair comparison, we use the Sentence-Bert~\cite{sbert} model to encode textual attributes as used by Amazon datasets. LightGCN is used for \model as well as all baselines requiring a base recommender. We implement all models in PyTorch~\cite{pytorch} and conduct experiments on one NVIDIA H100 GPU with 80GB memory.
 
\vspace{-7pt}
\subsection{Overall Performance Comparison~(RQ1)}
The experimental results are summarized in Table~\ref{tab:overall}, and we have the following observations.
(1) \textbf{\model significantly outperforms all MMRS and negative sampling methods across evaluation metrics}.
The performance gains are particularly significant in the Baby dataset, where we observe a 9.02\% improvement in Recall@10, and in the Clothing dataset, with an 8.70\% enhancement in NDCG@10. Even in scenarios where baseline methods achieve strong performance, such as in the Beauty and Sports datasets, \model remains competitive with improvements ranging from 2.31\% to 5.66\% across various metrics. These comprehensive improvements show the effectiveness of \model in producing high quality negative samples for MMRS and learning multi-modal user preferences. 
(2) \textbf{\model exhibits strong generalization across diverse domains.} 
\model achieves consistent improvements across heterogeneous datasets, highlighting its adaptability and effectiveness. The method shows remarkable effectiveness in both the Sports product category, with relatively less modality dependence~\cite{fettle}, and the Clothing category, which requires rich multimedia content such as images to describe complicated fashion designs. This dual success in handling both weak and strong modality dependency scenarios establishes \model as a versatile solution for diverse multi-modal recommendation applications.
(3) \textbf{Traditional collaborative filtering methods are insufficient to adapt to multi-modal scenarios}. 
Experimental results show that specialized MMRS frameworks, such as DRAGON and LGMRec, significantly outperform traditional models like BPR and LightGCN, highlighting the importance of effective multi-modal integration. In contrast, simple extensions of conventional models (\ie, VBPR and MMGCN) offer only marginal gains, underscoring the complexity of multi-modal data and the need for dedicated negative sampling strategies rather than direct adaptations of existing methods.

\begin{table}[t]
    \centering
    \caption{
    Effect of different modalities on \model's leaning. R and N denote Recall and NDCG, respectively. The best performance is in bold. ``V'' and ``T'' represent visual and textual inputs, respectively.
    }
    \label{tab:unimodal}
    \renewcommand{\arraystretch}{0.8}
    \resizebox*{.4\textwidth}{!}{
    \small
        \begin{tabular}{@{}cccccc@{}}
            \toprule
            Datasets & Variants & R@10 & R@20 & N@10 & N@20 \\
            \midrule
            \multirow{3}{*}{Baby} & V & 0.0667 & 0.0934 & 0.0307 & 0.0401 \\
            & T & 0.0678 & 0.0956 & 0.0327 & 0.0412 \\
            \rowcolor{gray!20} \cellcolor{white} & V\&T & \textbf{0.0701} & \textbf{0.1065} & \textbf{0.0342} & \textbf{0.0438} \\
            \midrule
            \multirow{3}{*}{Beauty} & V & 0.0794 & 0.1213 & 0.0426 & 0.0548 \\
            & T & 0.0803 & 0.1231 & 0.0439 & 0.0586 \\
            \rowcolor{gray!20} \cellcolor{white} &V\&T & \textbf{0.0832} & \textbf{0.1285} & \textbf{0.0473} & \textbf{0.0604} \\
            \midrule
            \multirow{3}{*}{Clothing} & V & 0.0621 & 0.0910 & 0.0327 & 0.0412 \\
            & T & 0.0632 & 0.0930 & 0.0338 & 0.0426 \\
            \rowcolor{gray!20} \cellcolor{white} & V\&T  &\textbf{0.0654} & \textbf{0.0961} & \textbf{0.0350} & \textbf{0.0441} \\
            \midrule
            \multirow{3}{*}{Sports} & V & 0.0667 & 0.0940 & 0.0336 & 0.0415 \\
            & T & 0.0687 & 0.1064 & 0.0357 & 0.0443 \\
            \rowcolor{gray!20} \cellcolor{white} & V\&T & \textbf{0.0763} & \textbf{0.1114} & \textbf{0.0411} & \textbf{0.0506} \\
            \bottomrule
            
        \end{tabular}
        }
        \vspace{-5pt}
\end{table}

\vspace{-5pt}
\subsection{Modality Ablation Study~(RQ2)}
According to our preliminary experiments in Section~\ref{sec:limitations}, existing MMRS can easily suffer from imbalanced learning of different modalities. This observation aligns with the findings from \citet{imbalance} that the dominance of certain modalities can lead to underfitting of other modalities during training. 
To demonstrate that \model effectively mitigates this issue, we conduct a series of experiments, which compares the performance of \model, denoted by V\&T with its unimodal counterpart, denoted by V and T. The results are shown in Table~\ref{tab:unimodal}.
The V\&T variant, which combines visual and textual information, achieves the best performance compared to the other two variants, \ie V and T. 
This outcome suggests that \model can effectively leverage the complementary attributes from both modalities to generate accurate recommendations that align with users' multi-modal preferences. 
And by introducing fine-grained attribute-level negative samples, \model demonstrates its capability to address the modality imbalance.

\begin{figure}[t]
    \centering
    \includegraphics[width=0.85\linewidth]{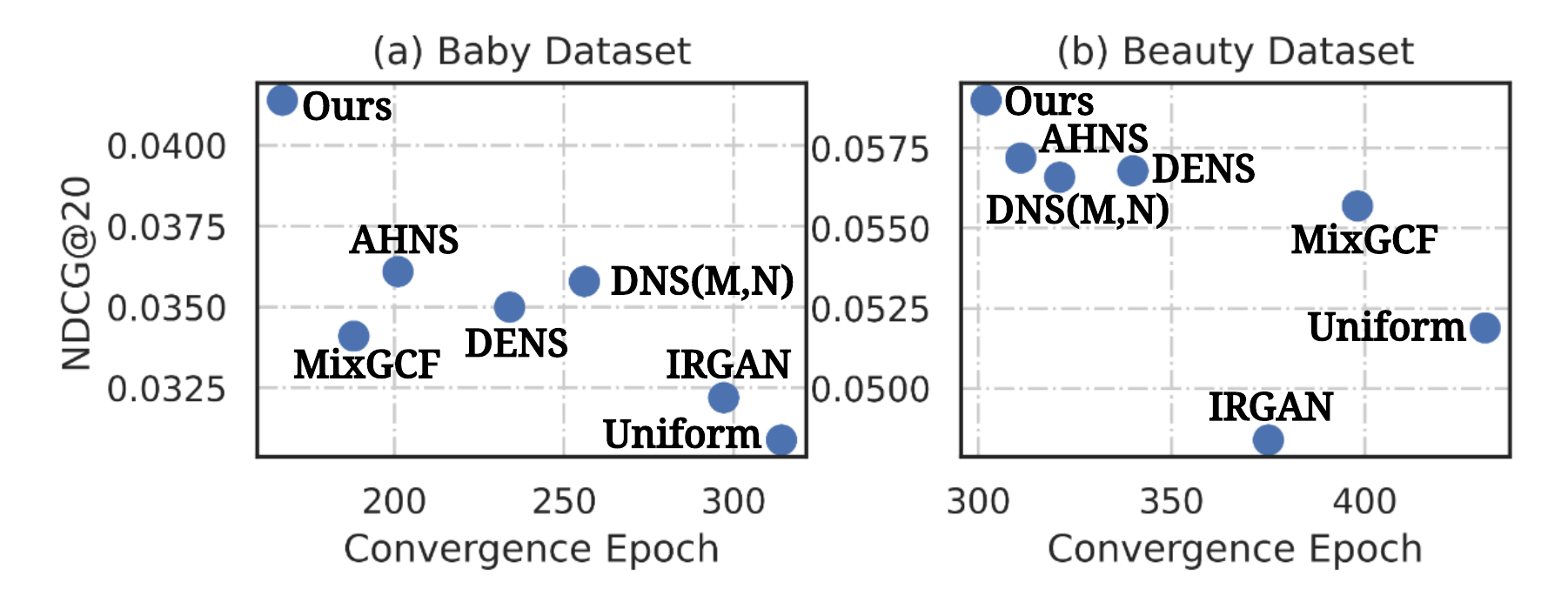}
    \caption{Visualization of the convergence epoch and NDCG@20 for different negative sampling methods. Higher NDCG and lower convergence epochs indicate higher-quality negative samples.}
    \label{fig:quality}
    \vspace{-8pt}
\end{figure}

\subsection{Negative Sample Quality~(RQ3)}
To evaluate the quality of the negative samples generated by \model in terms of cohesion and hardness, we compare its performance (\ie NDCG@20) against its convergence epoch number in Figure~\ref{fig:quality} to different negative sampling methods. For all methods we use the LightGCN model as base recommender. In particular, we remove the the causal learning module in \model for a fair comparison. The results show that \model effectively balances cohesion and hardness in negative samples. Its high NDCG@20 indicates that the negatives are cohesive, providing meaningful contrast, while the faster convergence suggests that the negatives are sufficiently hard, allowing the model to learn discriminative features quickly.

\begin{table}[h]
    \centering
    \vspace{-3pt}
    \caption{Ablation study of different components in \model. R and N denote Recall and NDCG, respectively, with the best performance highlighted in bold.}
    \label{tab:ablation}
    \renewcommand{\arraystretch}{0.8}
    \resizebox*{.4\textwidth}{!}{
    \small
        \begin{tabular}{@{}cccccc@{}}
            \toprule
            Datasets & Variants & R@10 & R@20 & N@10 & N@20 \\
            \midrule
            \multirow{4}{*}{Baby} & w/o NEG & 0.0571 & 0.0896 & 0.0307 & 0.0401 \\
            & w/o CF & 0.0641 & 0.0992 & 0.0321 & 0.0414 \\
            & w/o both & 0.0466 & 0.0684 & 0.0235 & 0.0317 \\
            \rowcolor{gray!20} \cellcolor{white} & \model & \textbf{0.0701} & \textbf{0.1065} & \textbf{0.0342} & \textbf{0.0438} \\
            \midrule
            \multirow{4}{*}{Beauty} & w/o NEG & 0.0714 & 0.1165 & 0.0416 & 0.0538 \\
            & w/o CF & 0.0805 & 0.1263 & 0.0457 & 0.0590 \\
            & w/o both & 0.0682 & 0.1083 & 0.0402 & 0.0531 \\
            \rowcolor{gray!20} \cellcolor{white} & \model & \textbf{0.0832} & \textbf{0.1285} & \textbf{0.0473} & \textbf{0.0604} \\
            \midrule
            \multirow{4}{*}{Clothing} & w/o NEG & 0.0626 & 0.0910 & 0.0327 & 0.0420 \\
            & w/o CF & 0.0637 & 0.0922 & 0.0341 & 0.0432 \\
            & w/o both & 0.0336 & 0.0531 & 0.0187 & 0.0232 \\
            \rowcolor{gray!20} \cellcolor{white} & \model & \textbf{0.0654} & \textbf{0.0961} & \textbf{0.0350} & \textbf{0.0441} \\
            \midrule
            \multirow{4}{*}{Sports} & w/o NEG & 0.0627 & 0.0940 & 0.0336 & 0.0415 \\
            & w/o CF & 0.0714 & 0.1055 & 0.0374 & 0.0471 \\
            & w/o both & 0.0539 & 0.0813 & 0.0304 & 0.0375 \\
            \rowcolor{gray!20} \cellcolor{white} & \model & \textbf{0.0763} & \textbf{0.1114} & \textbf{0.0411} & \textbf{0.0506} \\
            \bottomrule
        \end{tabular}
        }
    \vspace{-4pt}
\end{table}


\subsection{Module Ablation Study~(RQ4)}
To evaluate the effectiveness of each component in \model, we conduct ablation studies by removing the negative sample generation mechanism (\textit{w/o NEG}), the causal learning framework (\textit{w/o CF}), and both components simultaneously (\textit{w/o both}). 
For the \textit{w/o NEG} variant, we replace the negative sample generation with the uniform negative sampling as used in~\cite{vbpr, freedom}. 
For the \textit{w/o CF} variant, we remove the causal learning framework and directly project the obtained attribute embeddings. Table~\ref{tab:ablation} summarizes the results, with the following observations.
(1) \textbf{Both modules are effective in improving multi-modal recommendation.} Removing either the negative sample generation module (\textit{w/o NEG}) or the causal learning framework (\textit{w/o CF}) causes a significant drop in performance across all metrics, demonstrating their effectiveness in capturing nuanced user preferences and learning robust representations.
(2) \textbf{The two components of \model are complementary.} When both components are removed, the performance decline is even more pronounced, confirming that negative sample generation and causal learning work synergistically to enhance \model's ability to model user preferences accurately. 




\subsection{Hyper-parameter Analysis (RQ5)}

In this subsection, we conduct experiments to examine the impact of key hyper-parameters in \model. Specifically, we analyze the effect of (1) the weight of the multi-modal prediction score $\lambda$, (2) the weight of the multi-modal alignment loss $\alpha$, and (3) the temperature $\tau$ in the alignment loss. The results are presented in Figure~\ref{fig:hyper}. 
\paragraph{Effect of $\lambda$} 
The parameter $\lambda$ controls the impact of the multi-modal prediction score. Across all datasets, performance improves as $\lambda$ increases from 0 to 0.4, but stabilizes or slightly declines when $\lambda > 0.6$, indicating that the optimal range for $\lambda$ is between 0.4 and 0.6. Notably, setting $\lambda$ to 0 causes a significant performance drop, highlighting the importance of incorporating multi-modal signals to enhance recommendation quality.

\paragraph{Effect of $\alpha$} 
The parameter $\alpha$ controls the weight of the multi-modal alignment loss. Our experiments show that \model performs stably and robustly across a wide range of $\alpha$ values, from $1\text{e-4}$ to 1. However, performance slightly degrades when $\alpha$ becomes too large, indicating that overemphasizing the alignment loss may disrupt the balance among training objectives.

\paragraph{Effect of $\tau$} 
The temperature parameter $\tau$ in the alignment loss modulates the sharpness of the logits distribution. Varying $\tau$ across [0.07, 0.1, 0.2, 0.3, 1], we find that smaller values generally lead to better performance. Specifically, when $\tau$ is set to 1, the logits distribution becomes overly smooth, hindering the model's ability to learn meaningful alignments. 
\begin{figure}[t]
    \centering
    \includegraphics[width=0.43\textwidth]{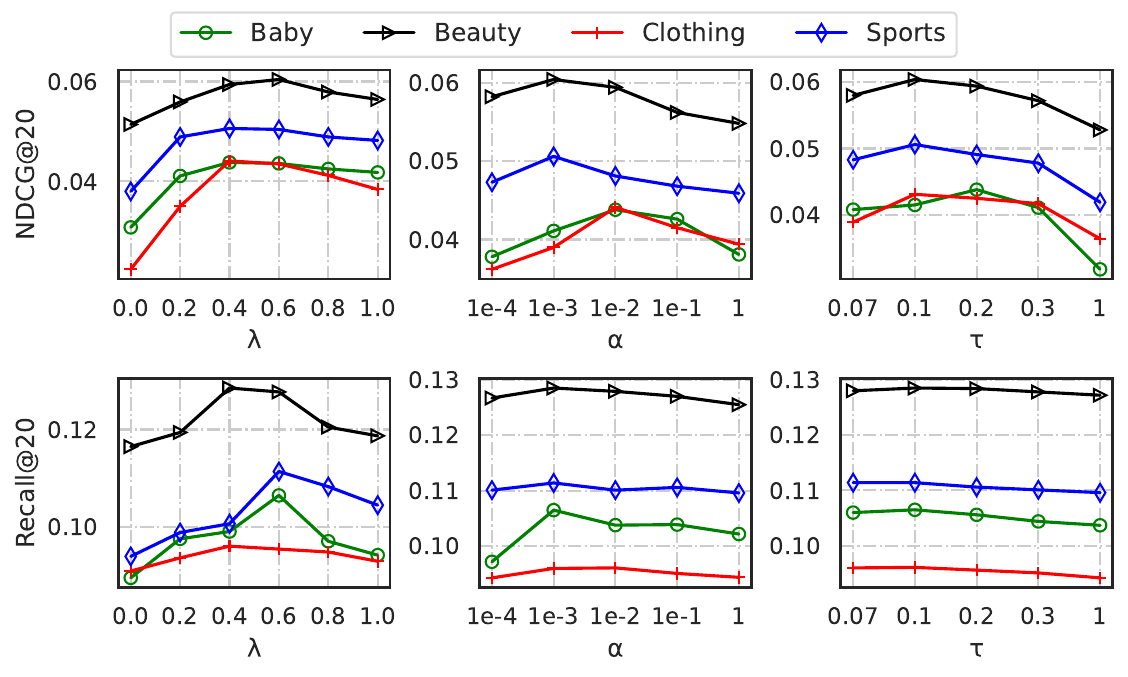}
    \vspace{-6pt}
    \caption{Effect of different hyper-parameters on \model.}
    \vspace{-6pt}
    \label{fig:hyper}
\end{figure}

\section{Related Work}
\paragraph{Hard Negative Sampling in RS}
Negative sampling is widely used across fields like computer vision~\cite{cv1, cv2, cv3}, natural language processing~\cite{nlp1, nlp2, nlp3}, and information retrieval~\cite{ir1, ir2, ir3}. In RS, hard negative sampling (HNS) techniques are used to improve negative sample quality. Early HNS methods~\cite{pinsage, dns, aobpr} focused on ranking uninteracted items to identify suitable negatives, such as the dynamic selection mechanism in DNS~\cite{dns}.
HNS approaches are now divided into sampling-based and generation-based methods~\cite{survey}. Sampling-based approaches utilize advanced algorithms like reinforcement learning~\cite{rl1, rl2} and social network analysis~\cite{sn1, sn2}. And generation-based methods, such as GANs~\cite{gan, gan1, gan2}, synthesize challenging negatives. Applying these methods to MMRS remains difficult due to the complexity of multi-modal tasks.
\vspace{-5pt}
\paragraph{MMRS}
The growth of multimedia data has driven MMRS. Early research~\cite{mcf1, mcf2, mcf3} extended collaborative filtering by incorporating multi-modal content as side information. For example, VBPR~\cite{vbpr} integrates visual features into the BPR~\cite{bpr} framework.
More recently, Graph Neural Networks (GNNs) have become powerful tools for MMRS~\cite{gnn1, gnn2, gnn3}, enabling the modeling of complex relationships between users, items, and modalities. MMGCN~\cite{gnn4}, for instance, builds modality-specific user-item graphs and enriches node embeddings via neighbor aggregation. While negative sampling remains common, the effectiveness of sampling strategies in multi-modal contexts remains underexplored.
\vspace{-5pt}
\paragraph{Causal Inference for RS}
Recently, there has been growing interest in incorporating causal reasoning into RS~\cite{causalrs}. Several works~\cite{causal1, causal2, causal3} have used causal frameworks to address recommendation biases, such as popularity bias~\cite{pop1, pop2} and exposure bias~\cite{expo1, expo2}. For instance, MACR~\cite{macr} reduces popularity bias by modeling cause-effect relationships in recommendation using a causal graph. Causal inference also helps to model the underlying causality behind user preferences and behavior, mitigate spurious correlations~\cite{invrl, mcln} and enable explainable recommendation~\cite{explain1, explain2}.

\section{Conclusion and Future Work}
In this paper, we first identify the challenge of producing both cohesive and hard negative samples in multi-modal recommender systems. Then we address it by proposing \model, which uses the strong ability of MLLMs to generate balanced, contrastive negative samples. Experiments show \model significantly improves recommendation performance across metrics and datasets. 

\begin{acks}
    This paper is supported by NSFC (No. 62176155), Shanghai Municipal Science and Technology Major Project, China, under grant No. 2021SHZDZX0102.
\end{acks}

\bibliographystyle{ACM-Reference-Format}
\balance
\bibliography{reference}


\begin{thebibliography}{86}


\ifx \showCODEN    \undefined \def \showCODEN     #1{\unskip}     \fi
\ifx \showDOI      \undefined \def \showDOI       #1{#1}\fi
\ifx \showISBNx    \undefined \def \showISBNx     #1{\unskip}     \fi
\ifx \showISBNxiii \undefined \def \showISBNxiii  #1{\unskip}     \fi
\ifx \showISSN     \undefined \def \showISSN      #1{\unskip}     \fi
\ifx \showLCCN     \undefined \def \showLCCN      #1{\unskip}     \fi
\ifx \shownote     \undefined \def \shownote      #1{#1}          \fi
\ifx \showarticletitle \undefined \def \showarticletitle #1{#1}   \fi
\ifx \showURL      \undefined \def \showURL       {\relax}        \fi
\providecommand\bibfield[2]{#2}
\providecommand\bibinfo[2]{#2}
\providecommand\natexlab[1]{#1}
\providecommand\showeprint[2][]{arXiv:#2}

\bibitem[Bai et~al\mbox{.}(2024)]%
        {mmrs3}
\bibfield{author}{\bibinfo{person}{Haoyue Bai}, \bibinfo{person}{Le Wu}, \bibinfo{person}{Min Hou}, \bibinfo{person}{Miaomiao Cai}, \bibinfo{person}{Zhuangzhuang He}, \bibinfo{person}{Yuyang Zhou}, \bibinfo{person}{Richang Hong}, {and} \bibinfo{person}{Meng Wang}.} \bibinfo{year}{2024}\natexlab{}.
\newblock \showarticletitle{Multimodality Invariant Learning for Multimedia-Based New Item Recommendation} \emph{(\bibinfo{series}{SIGIR '24})}. \bibinfo{pages}{677–686}.
\newblock


\bibitem[Bai et~al\mbox{.}(2023)]%
        {qwen}
\bibfield{author}{\bibinfo{person}{Jinze Bai}, \bibinfo{person}{Shuai Bai}, \bibinfo{person}{Shusheng Yang}, \bibinfo{person}{Shijie Wang}, \bibinfo{person}{Sinan Tan}, \bibinfo{person}{Peng Wang}, \bibinfo{person}{Junyang Lin}, \bibinfo{person}{Chang Zhou}, {and} \bibinfo{person}{Jingren Zhou}.} \bibinfo{year}{2023}\natexlab{}.
\newblock \bibinfo{title}{Qwen-VL: A Versatile Vision-Language Model for Understanding, Localization, Text Reading, and Beyond}.
\newblock
\showeprint[arxiv]{2308.12966}~[cs.CV]


\bibitem[Chen et~al\mbox{.}(2021)]%
        {DBLP:journals/tois/ChenJWZFCEH21}
\bibfield{author}{\bibinfo{person}{Jiawei Chen}, \bibinfo{person}{Chengquan Jiang}, \bibinfo{person}{Can Wang}, \bibinfo{person}{Sheng Zhou}, \bibinfo{person}{Yan Feng}, \bibinfo{person}{Chun Chen}, \bibinfo{person}{Martin Ester}, {and} \bibinfo{person}{Xiangnan He}.} \bibinfo{year}{2021}\natexlab{}.
\newblock \showarticletitle{CoSam: An Efficient Collaborative Adaptive Sampler for Recommendation}.
\newblock \bibinfo{journal}{\emph{{ACM} Trans. Inf. Syst.}} \bibinfo{volume}{39}, \bibinfo{number}{3} (\bibinfo{year}{2021}), \bibinfo{pages}{34:1--34:24}.
\newblock


\bibitem[Chen et~al\mbox{.}(2019)]%
        {sn1}
\bibfield{author}{\bibinfo{person}{Jiawei Chen}, \bibinfo{person}{Can Wang}, \bibinfo{person}{Sheng Zhou}, \bibinfo{person}{Qihao Shi}, \bibinfo{person}{Yan Feng}, {and} \bibinfo{person}{Chun Chen}.} \bibinfo{year}{2019}\natexlab{}.
\newblock \showarticletitle{SamWalker: Social Recommendation with Informative Sampling Strategy} \emph{(\bibinfo{series}{WWW '19})}. \bibinfo{pages}{228–239}.
\newblock


\bibitem[Chen et~al\mbox{.}(2017)]%
        {mcf1}
\bibfield{author}{\bibinfo{person}{Jingyuan Chen}, \bibinfo{person}{Hanwang Zhang}, \bibinfo{person}{Xiangnan He}, \bibinfo{person}{Liqiang Nie}, \bibinfo{person}{Wei Liu}, {and} \bibinfo{person}{Tat-Seng Chua}.} \bibinfo{year}{2017}\natexlab{}.
\newblock \showarticletitle{Attentive Collaborative Filtering: Multimedia Recommendation with Item- and Component-Level Attention} \emph{(\bibinfo{series}{SIGIR '17})}. \bibinfo{pages}{335–344}.
\newblock


\bibitem[Czapp et~al\mbox{.}(2024)]%
        {mmrs1}
\bibfield{author}{\bibinfo{person}{\'{A}d\'{a}m~Tibor Czapp}, \bibinfo{person}{M\'{a}ty\'{a}s Jani}, \bibinfo{person}{B\'{a}lint Domi\'{a}n}, {and} \bibinfo{person}{Bal\'{a}zs Hidasi}.} \bibinfo{year}{2024}\natexlab{}.
\newblock \showarticletitle{Dynamic Product Image Generation and Recommendation at Scale for Personalized E-commerce} \emph{(\bibinfo{series}{RecSys '24})}. \bibinfo{pages}{768–770}.
\newblock


\bibitem[Deng et~al\mbox{.}(2024)]%
        {causal1}
\bibfield{author}{\bibinfo{person}{Jianfeng Deng}, \bibinfo{person}{Qingfeng Chen}, \bibinfo{person}{Debo Cheng}, \bibinfo{person}{Jiuyong Li}, \bibinfo{person}{Lin Liu}, {and} \bibinfo{person}{Xiaojing Du}.} \bibinfo{year}{2024}\natexlab{}.
\newblock \showarticletitle{Mitigating Dual Latent Confounding Biases in Recommender Systems}.
\newblock
\showeprint[arxiv]{2410.12451}~[cs.IR]


\bibitem[Devasier et~al\mbox{.}(2024)]%
        {nlp1}
\bibfield{author}{\bibinfo{person}{Jacob Devasier}, \bibinfo{person}{Yogesh Gurjar}, {and} \bibinfo{person}{Chengkai Li}.} \bibinfo{year}{2024}\natexlab{}.
\newblock \showarticletitle{Robust Frame-Semantic Models with Lexical Unit Trees and Negative Samples} \emph{(\bibinfo{series}{ACL '24})}.
\newblock


\bibitem[Ding et~al\mbox{.}(2019)]%
        {rl2}
\bibfield{author}{\bibinfo{person}{Jingtao Ding}, \bibinfo{person}{Yuhan Quan}, \bibinfo{person}{Xiangnan He}, \bibinfo{person}{Yong Li}, {and} \bibinfo{person}{Depeng Jin}.} \bibinfo{year}{2019}\natexlab{}.
\newblock \showarticletitle{Reinforced negative sampling for recommendation with exposure data} \emph{(\bibinfo{series}{IJCAI'19})}. \bibinfo{pages}{99–109}.
\newblock


\bibitem[Du et~al\mbox{.}(2022)]%
        {invrl}
\bibfield{author}{\bibinfo{person}{Xiaoyu Du}, \bibinfo{person}{Zike Wu}, \bibinfo{person}{Fuli Feng}, \bibinfo{person}{Xiangnan He}, {and} \bibinfo{person}{Jinhui Tang}.} \bibinfo{year}{2022}\natexlab{}.
\newblock \showarticletitle{Invariant Representation Learning for Multimedia Recommendation} \emph{(\bibinfo{series}{MM '22})}. \bibinfo{pages}{619–628}.
\newblock


\bibitem[Fan et~al\mbox{.}(2023)]%
        {hard1}
\bibfield{author}{\bibinfo{person}{Lu Fan}, \bibinfo{person}{Jiashu Pu}, \bibinfo{person}{Rongsheng Zhang}, {and} \bibinfo{person}{Xiao{-}Ming Wu}.} \bibinfo{year}{2023}\natexlab{}.
\newblock \showarticletitle{Neighborhood-based Hard Negative Mining for Sequential Recommendation} \emph{(\bibinfo{series}{SIGIR'23})}. \bibinfo{pages}{2042–2046}.
\newblock


\bibitem[Gao et~al\mbox{.}(2024)]%
        {causal}
\bibfield{author}{\bibinfo{person}{Chen Gao}, \bibinfo{person}{Yu Zheng}, \bibinfo{person}{Wenjie Wang}, \bibinfo{person}{Fuli Feng}, \bibinfo{person}{Xiangnan He}, {and} \bibinfo{person}{Yong Li}.} \bibinfo{year}{2024}\natexlab{}.
\newblock \showarticletitle{Causal Inference in Recommender Systems: A Survey and Future Directions}.
\newblock \bibinfo{journal}{\emph{ACM Trans. Inf. Syst.}} \bibinfo{volume}{42}, \bibinfo{number}{4}, Article \bibinfo{articleno}{88} (\bibinfo{date}{Feb.} \bibinfo{year}{2024}), \bibinfo{numpages}{32}~pages.
\newblock
\showISSN{1046-8188}


\bibitem[Glorot and Bengio(2010)]%
        {xavier}
\bibfield{author}{\bibinfo{person}{Xavier Glorot} {and} \bibinfo{person}{Yoshua Bengio}.} \bibinfo{year}{2010}\natexlab{}.
\newblock \showarticletitle{Understanding the difficulty of training deep feedforward neural networks}. In \bibinfo{booktitle}{\emph{Proceedings of the Thirteenth International Conference on Artificial Intelligence and Statistics}} \emph{(\bibinfo{series}{Proceedings of Machine Learning Research})}. \bibinfo{publisher}{PMLR}.
\newblock


\bibitem[Goodfellow et~al\mbox{.}(2014)]%
        {gan}
\bibfield{author}{\bibinfo{person}{Ian~J. Goodfellow}, \bibinfo{person}{Jean Pouget-Abadie}, \bibinfo{person}{Mehdi Mirza}, \bibinfo{person}{Bing Xu}, \bibinfo{person}{David Warde-Farley}, \bibinfo{person}{Sherjil Ozair}, \bibinfo{person}{Aaron Courville}, {and} \bibinfo{person}{Yoshua Bengio}.} \bibinfo{year}{2014}\natexlab{}.
\newblock \bibinfo{title}{Generative Adversarial Networks}.
\newblock
\showeprint[arxiv]{1406.2661}~[stat.ML]


\bibitem[Gui et~al\mbox{.}(2023)]%
        {collapse}
\bibfield{author}{\bibinfo{person}{Jie Gui}, \bibinfo{person}{Zhenan Sun}, \bibinfo{person}{Yonggang Wen}, \bibinfo{person}{Dacheng Tao}, {and} \bibinfo{person}{Jieping Ye}.} \bibinfo{year}{2023}\natexlab{}.
\newblock \showarticletitle{A Review on Generative Adversarial Networks: Algorithms, Theory, and Applications}.
\newblock \bibinfo{journal}{\emph{IEEE Trans. on Knowl. and Data Eng.}} \bibinfo{volume}{35}, \bibinfo{number}{4} (\bibinfo{date}{April} \bibinfo{year}{2023}), \bibinfo{pages}{3313–3332}.
\newblock


\bibitem[Guo et~al\mbox{.}(2024)]%
        {lgmrec}
\bibfield{author}{\bibinfo{person}{Zhiqiang Guo}, \bibinfo{person}{Jianjun Li}, \bibinfo{person}{Guohui Li}, \bibinfo{person}{Chaoyang Wang}, \bibinfo{person}{Si Shi}, {and} \bibinfo{person}{Bin Ruan}.} \bibinfo{year}{2024}\natexlab{}.
\newblock \showarticletitle{LGMRec: Local and Global Graph Learning for Multimodal Recommendation} \emph{(\bibinfo{series}{AAAI '24})}. \bibinfo{pages}{8454--8462}.
\newblock


\bibitem[He and McAuley(2016)]%
        {vbpr}
\bibfield{author}{\bibinfo{person}{Ruining He} {and} \bibinfo{person}{Julian McAuley}.} \bibinfo{year}{2016}\natexlab{}.
\newblock \showarticletitle{VBPR: visual Bayesian Personalized Ranking from implicit feedback} \emph{(\bibinfo{series}{AAAI'16})}. \bibinfo{pages}{144–150}.
\newblock


\bibitem[He et~al\mbox{.}(2020)]%
        {lightgcn}
\bibfield{author}{\bibinfo{person}{Xiangnan He}, \bibinfo{person}{Kuan Deng}, \bibinfo{person}{Xiang Wang}, \bibinfo{person}{Yan Li}, \bibinfo{person}{YongDong Zhang}, {and} \bibinfo{person}{Meng Wang}.} \bibinfo{year}{2020}\natexlab{}.
\newblock \showarticletitle{LightGCN: Simplifying and Powering Graph Convolution Network for Recommendation} \emph{(\bibinfo{series}{SIGIR '20})}. \bibinfo{pages}{639–648}.
\newblock


\bibitem[He et~al\mbox{.}(2022)]%
        {spur2}
\bibfield{author}{\bibinfo{person}{Xiangnan He}, \bibinfo{person}{Yang Zhang}, \bibinfo{person}{Fuli Feng}, \bibinfo{person}{Chonggang Song}, \bibinfo{person}{Lingling Yi}, \bibinfo{person}{Guohui Ling}, {and} \bibinfo{person}{Yongdong Zhang}.} \bibinfo{year}{2022}\natexlab{}.
\newblock \showarticletitle{Addressing Confounding Feature Issue for Causal Recommendation}.
\newblock \bibinfo{journal}{\emph{ACM Trans. Inf. Syst.}} (\bibinfo{year}{2022}).
\newblock
\showISSN{1046-8188}


\bibitem[Huang et~al\mbox{.}(2021)]%
        {mixgcf}
\bibfield{author}{\bibinfo{person}{Tinglin Huang}, \bibinfo{person}{Yuxiao Dong}, \bibinfo{person}{Ming Ding}, \bibinfo{person}{Zhen Yang}, \bibinfo{person}{Wenzheng Feng}, \bibinfo{person}{Xinyu Wang}, {and} \bibinfo{person}{Jie Tang}.} \bibinfo{year}{2021}\natexlab{}.
\newblock \showarticletitle{MixGCF: An Improved Training Method for Graph Neural Network-based Recommender Systems} \emph{(\bibinfo{series}{KDD '21})}.
\newblock


\bibitem[J\"{a}rvelin and Kek\"{a}l\"{a}inen(2002)]%
        {ndcg}
\bibfield{author}{\bibinfo{person}{Kalervo J\"{a}rvelin} {and} \bibinfo{person}{Jaana Kek\"{a}l\"{a}inen}.} \bibinfo{year}{2002}\natexlab{}.
\newblock \showarticletitle{Cumulated gain-based evaluation of IR techniques}.
\newblock \bibinfo{journal}{\emph{ACM Trans. Inf. Syst.}} \bibinfo{volume}{20}, \bibinfo{number}{4} (\bibinfo{date}{Oct.} \bibinfo{year}{2002}), \bibinfo{pages}{422–446}.
\newblock
\showISSN{1046-8188}


\bibitem[Ji et~al\mbox{.}(2025)]%
        {text1}
\bibfield{author}{\bibinfo{person}{Deyi Ji}, \bibinfo{person}{Feng Zhao}, \bibinfo{person}{Hongtao Lu}, \bibinfo{person}{Feng Wu}, {and} \bibinfo{person}{Jieping Ye}.} \bibinfo{year}{2025}\natexlab{}.
\newblock \showarticletitle{Structural and Statistical Texture Knowledge Distillation and Learning for Segmentation}.
\newblock \bibinfo{journal}{\emph{IEEE Trans. Pattern Anal. Mach. Intell.}} \bibinfo{volume}{47}, \bibinfo{number}{5} (\bibinfo{date}{Jan.} \bibinfo{year}{2025}), \bibinfo{pages}{3639–3656}.
\newblock
\showISSN{0162-8828}
\urldef\tempurl%
\url{https://doi.org/10.1109/TPAMI.2025.3536481}
\showDOI{\tempurl}


\bibitem[Ji et~al\mbox{.}(2024b)]%
        {image1}
\bibfield{author}{\bibinfo{person}{Deyi Ji}, \bibinfo{person}{Feng Zhao}, \bibinfo{person}{Lanyun Zhu}, \bibinfo{person}{Wenwei Jin}, \bibinfo{person}{Hongtao Lu}, {and} \bibinfo{person}{Jieping Ye}.} \bibinfo{year}{2024}\natexlab{b}.
\newblock \bibinfo{title}{Discrete Latent Perspective Learning for Segmentation and Detection}.
\newblock
\showeprint[arxiv]{2406.10475}~[cs.CV]
\urldef\tempurl%
\url{https://arxiv.org/abs/2406.10475}
\showURL{%
\tempurl}


\bibitem[Ji et~al\mbox{.}(2024c)]%
        {table}
\bibfield{author}{\bibinfo{person}{Deyi Ji}, \bibinfo{person}{Lanyun Zhu}, \bibinfo{person}{Siqi Gao}, \bibinfo{person}{Peng Xu}, \bibinfo{person}{Hongtao Lu}, \bibinfo{person}{Jieping Ye}, {and} \bibinfo{person}{Feng Zhao}.} \bibinfo{year}{2024}\natexlab{c}.
\newblock \bibinfo{title}{Tree-of-Table: Unleashing the Power of LLMs for Enhanced Large-Scale Table Understanding}.
\newblock
\showeprint[arxiv]{2411.08516}~[cs.CL]
\urldef\tempurl%
\url{https://arxiv.org/abs/2411.08516}
\showURL{%
\tempurl}


\bibitem[Ji et~al\mbox{.}(2024a)]%
        {pop2}
\bibfield{author}{\bibinfo{person}{Yanbiao Ji}, \bibinfo{person}{Yue Ding}, \bibinfo{person}{Chang Liu}, \bibinfo{person}{Yuxiang Lu}, \bibinfo{person}{Xin Xin}, {and} \bibinfo{person}{Hongtao Lu}.} \bibinfo{year}{2024}\natexlab{a}.
\newblock \showarticletitle{Topology-Aware Popularity Debiasing via Simplicial Complexes}.
\newblock
\showeprint[arxiv]{2411.13892}~[cs.IR]


\bibitem[Jiang et~al\mbox{.}(2024)]%
        {diffmm}
\bibfield{author}{\bibinfo{person}{Yangqin Jiang}, \bibinfo{person}{Lianghao Xia}, \bibinfo{person}{Wei Wei}, \bibinfo{person}{Da Luo}, \bibinfo{person}{Kangyi Lin}, {and} \bibinfo{person}{Chao Huang}.} \bibinfo{year}{2024}\natexlab{}.
\newblock \showarticletitle{DiffMM: Multi-Modal Diffusion Model for Recommendation} \emph{(\bibinfo{series}{MM '24})}. \bibinfo{pages}{7591–7599}.
\newblock


\bibitem[Jin et~al\mbox{.}(2020)]%
        {gan2}
\bibfield{author}{\bibinfo{person}{Binbin Jin}, \bibinfo{person}{Defu Lian}, \bibinfo{person}{Zheng Liu}, \bibinfo{person}{Qi Liu}, \bibinfo{person}{Jianhui Ma}, \bibinfo{person}{Xing Xie}, {and} \bibinfo{person}{Enhong Chen}.} \bibinfo{year}{2020}\natexlab{}.
\newblock \showarticletitle{Sampling-decomposable generative adversarial recommender} \emph{(\bibinfo{series}{NIPS '20})}. Article \bibinfo{articleno}{1897}, \bibinfo{numpages}{11}~pages.
\newblock


\bibitem[Kalantidis et~al\mbox{.}(2020)]%
        {cv2}
\bibfield{author}{\bibinfo{person}{Yannis Kalantidis}, \bibinfo{person}{Mert~Bulent Sariyildiz}, \bibinfo{person}{Noe Pion}, \bibinfo{person}{Philippe Weinzaepfel}, {and} \bibinfo{person}{Diane Larlus}.} \bibinfo{year}{2020}\natexlab{}.
\newblock \showarticletitle{Hard Negative Mixing for Contrastive Learning} \emph{(\bibinfo{series}{NeurIPS '20})}. Article \bibinfo{articleno}{1829}, \bibinfo{numpages}{12}~pages.
\newblock


\bibitem[Kingma and Ba(2015)]%
        {adam}
\bibfield{author}{\bibinfo{person}{Diederik~P. Kingma} {and} \bibinfo{person}{Jimmy Ba}.} \bibinfo{year}{2015}\natexlab{}.
\newblock \showarticletitle{Adam: A Method for Stochastic Optimization} \emph{(\bibinfo{series}{ICLR '15})}.
\newblock


\bibitem[Krause et~al\mbox{.}(2024)]%
        {expo1}
\bibfield{author}{\bibinfo{person}{Thorsten Krause}, \bibinfo{person}{Alina Deriyeva}, \bibinfo{person}{Jan~H. Beinke}, \bibinfo{person}{Gerrit~Y. Bartels}, {and} \bibinfo{person}{Oliver Thomas}.} \bibinfo{year}{2024}\natexlab{}.
\newblock \showarticletitle{Mitigating Exposure Bias in Recommender Systems—A Comparative Analysis of Discrete Choice Models}.
\newblock \bibinfo{journal}{\emph{ACM Trans. Recomm. Syst.}} \bibinfo{volume}{3}, \bibinfo{number}{2}, Article \bibinfo{articleno}{19} (\bibinfo{date}{Nov.} \bibinfo{year}{2024}), \bibinfo{numpages}{37}~pages.
\newblock


\bibitem[Lai et~al\mbox{.}(2023)]%
        {dens}
\bibfield{author}{\bibinfo{person}{Riwei Lai}, \bibinfo{person}{Li Chen}, \bibinfo{person}{Yuhan Zhao}, \bibinfo{person}{Rui Chen}, {and} \bibinfo{person}{Qilong Han}.} \bibinfo{year}{2023}\natexlab{}.
\newblock \showarticletitle{Disentangled Negative Sampling for Collaborative Filtering} \emph{(\bibinfo{series}{WSDM '23})}. \bibinfo{pages}{96–104}.
\newblock


\bibitem[Lai et~al\mbox{.}(2025)]%
        {ahns}
\bibfield{author}{\bibinfo{person}{Riwei Lai}, \bibinfo{person}{Rui Chen}, \bibinfo{person}{Qilong Han}, \bibinfo{person}{Chi Zhang}, {and} \bibinfo{person}{Li Chen}.} \bibinfo{year}{2025}\natexlab{}.
\newblock \showarticletitle{Adaptive hardness negative sampling for collaborative filtering} \emph{(\bibinfo{series}{AAAI'24})}. Article \bibinfo{articleno}{961}, \bibinfo{numpages}{8}~pages.
\newblock


\bibitem[Li et~al\mbox{.}(2024a)]%
        {attn}
\bibfield{author}{\bibinfo{person}{Hongkang Li}, \bibinfo{person}{Meng Wang}, \bibinfo{person}{Tengfei Ma}, \bibinfo{person}{Sijia Liu}, \bibinfo{person}{ZAIXI ZHANG}, {and} \bibinfo{person}{Pin-Yu Chen}.} \bibinfo{year}{2024}\natexlab{a}.
\newblock \showarticletitle{What Improves the Generalization of Graph Transformers? A Theoretical Dive into the Self-attention and Positional Encoding} \emph{(\bibinfo{series}{ICML'24})}.
\newblock


\bibitem[Li et~al\mbox{.}(2023)]%
        {mcln}
\bibfield{author}{\bibinfo{person}{Shuaiyang Li}, \bibinfo{person}{Dan Guo}, \bibinfo{person}{Kang Liu}, \bibinfo{person}{Richang Hong}, {and} \bibinfo{person}{Feng Xue}.} \bibinfo{year}{2023}\natexlab{}.
\newblock \showarticletitle{Multimodal Counterfactual Learning Network for Multimedia-based Recommendation} \emph{(\bibinfo{series}{SIGIR '23})}. \bibinfo{pages}{1539–1548}.
\newblock


\bibitem[Li et~al\mbox{.}(2020)]%
        {gnn1}
\bibfield{author}{\bibinfo{person}{Xingchen Li}, \bibinfo{person}{Xiang Wang}, \bibinfo{person}{Xiangnan He}, \bibinfo{person}{Long Chen}, \bibinfo{person}{Jun Xiao}, {and} \bibinfo{person}{Tat-Seng Chua}.} \bibinfo{year}{2020}\natexlab{}.
\newblock \bibinfo{title}{Hierarchical Fashion Graph Network for Personalized Outfit Recommendation}.
\newblock
\showeprint[arxiv]{2005.12566}~[cs.IR]


\bibitem[Li et~al\mbox{.}(2024b)]%
        {fettle}
\bibfield{author}{\bibinfo{person}{Yang Li}, \bibinfo{person}{Qi'Ao Zhao}, \bibinfo{person}{Chen Lin}, \bibinfo{person}{Jinsong Su}, {and} \bibinfo{person}{Zhilin Zhang}.} \bibinfo{year}{2024}\natexlab{b}.
\newblock \showarticletitle{Who To Align With: Feedback-Oriented Multi-Modal Alignment in Recommendation Systems} \emph{(\bibinfo{series}{SIGIR '24})}. \bibinfo{pages}{667–676}.
\newblock


\bibitem[Liu et~al\mbox{.}(2017)]%
        {mcf3}
\bibfield{author}{\bibinfo{person}{Qiang Liu}, \bibinfo{person}{Shu Wu}, {and} \bibinfo{person}{Liang Wang}.} \bibinfo{year}{2017}\natexlab{}.
\newblock \showarticletitle{DeepStyle: Learning User Preferences for Visual Recommendation} \emph{(\bibinfo{series}{SIGIR '17})}. \bibinfo{pages}{841–844}.
\newblock


\bibitem[Liu et~al\mbox{.}(2021)]%
        {gnn2}
\bibfield{author}{\bibinfo{person}{Yiyu Liu}, \bibinfo{person}{Qian Liu}, \bibinfo{person}{Yu Tian}, \bibinfo{person}{Changping Wang}, \bibinfo{person}{Yanan Niu}, \bibinfo{person}{Yang Song}, {and} \bibinfo{person}{Chenliang Li}.} \bibinfo{year}{2021}\natexlab{}.
\newblock \showarticletitle{Concept-Aware Denoising Graph Neural Network for Micro-Video Recommendation} \emph{(\bibinfo{series}{CIKM '21})}. \bibinfo{pages}{1099–1108}.
\newblock


\bibitem[Ma et~al\mbox{.}(2023)]%
        {hard2}
\bibfield{author}{\bibinfo{person}{Haokai Ma}, \bibinfo{person}{Ruobing Xie}, \bibinfo{person}{Lei Meng}, \bibinfo{person}{Xin Chen}, \bibinfo{person}{Xu Zhang}, \bibinfo{person}{Leyu Lin}, {and} \bibinfo{person}{Jie Zhou}.} \bibinfo{year}{2023}\natexlab{}.
\newblock \showarticletitle{Exploring False Hard Negative Sample in Cross-Domain Recommendation} \emph{(\bibinfo{series}{RecSys'23})}. \bibinfo{pages}{502–514}.
\newblock


\bibitem[Ma et~al\mbox{.}(2024a)]%
        {negrec}
\bibfield{author}{\bibinfo{person}{Haokai Ma}, \bibinfo{person}{Ruobing Xie}, \bibinfo{person}{Lei Meng}, \bibinfo{person}{Fuli Feng}, \bibinfo{person}{Xiaoyu Du}, \bibinfo{person}{Xingwu Sun}, \bibinfo{person}{Zhanhui Kang}, {and} \bibinfo{person}{Xiangxu Meng}.} \bibinfo{year}{2024}\natexlab{a}.
\newblock \bibinfo{title}{Negative Sampling in Recommendation: A Survey and Future Directions}.
\newblock
\showeprint[arxiv]{2409.07237}~[cs.IR]


\bibitem[Ma et~al\mbox{.}(2024b)]%
        {mmrs2}
\bibfield{author}{\bibinfo{person}{Haokai Ma}, \bibinfo{person}{Yimeng Yang}, \bibinfo{person}{Lei Meng}, \bibinfo{person}{Ruobing Xie}, {and} \bibinfo{person}{Xiangxu Meng}.} \bibinfo{year}{2024}\natexlab{b}.
\newblock \showarticletitle{Multimodal Conditioned Diffusion Model for Recommendation} \emph{(\bibinfo{series}{WWW '24})}. \bibinfo{pages}{1733–1740}.
\newblock


\bibitem[Mansoury et~al\mbox{.}(2024)]%
        {expo2}
\bibfield{author}{\bibinfo{person}{Masoud Mansoury}, \bibinfo{person}{Bamshad Mobasher}, {and} \bibinfo{person}{Herke van Hoof}.} \bibinfo{year}{2024}\natexlab{}.
\newblock \showarticletitle{Mitigating Exposure Bias in Online Learning to Rank Recommendation: A Novel Reward Model for Cascading Bandits} \emph{(\bibinfo{series}{CIKM'24})}. \bibinfo{pages}{1638–1648}.
\newblock


\bibitem[McAuley et~al\mbox{.}(2015)]%
        {amazon}
\bibfield{author}{\bibinfo{person}{Julian McAuley}, \bibinfo{person}{Christopher Targett}, \bibinfo{person}{Qinfeng Shi}, {and} \bibinfo{person}{Anton van~den Hengel}.} \bibinfo{year}{2015}\natexlab{}.
\newblock \showarticletitle{Image-Based Recommendations on Styles and Substitutes} \emph{(\bibinfo{series}{SIGIR '15})}. \bibinfo{pages}{43–52}.
\newblock


\bibitem[McKechnie et~al\mbox{.}(2024)]%
        {ir1}
\bibfield{author}{\bibinfo{person}{Jack McKechnie}, \bibinfo{person}{Graham McDonald}, {and} \bibinfo{person}{Craig Macdonald}.} \bibinfo{year}{2024}\natexlab{}.
\newblock \showarticletitle{Bi-Objective Negative Sampling for Sensitivity-Aware Search} \emph{(\bibinfo{series}{SIGIR '24})}. \bibinfo{pages}{2296–2300}.
\newblock


\bibitem[Nguyen and Fang(2024)]%
        {diffusion}
\bibfield{author}{\bibinfo{person}{Trung-Kien Nguyen} {and} \bibinfo{person}{Yuan Fang}.} \bibinfo{year}{2024}\natexlab{}.
\newblock \showarticletitle{Diffusion-based Negative Sampling on Graphs for Link Prediction} \emph{(\bibinfo{series}{WWW '24})}. \bibinfo{pages}{948–958}.
\newblock


\bibitem[Paszke et~al\mbox{.}(2019)]%
        {pytorch}
\bibfield{author}{\bibinfo{person}{Adam Paszke}, \bibinfo{person}{Sam Gross}, \bibinfo{person}{Francisco Massa}, \bibinfo{person}{Adam Lerer}, \bibinfo{person}{James Bradbury}, \bibinfo{person}{Gregory Chanan}, \bibinfo{person}{Trevor Killeen}, \bibinfo{person}{Zeming Lin}, \bibinfo{person}{Natalia Gimelshein}, \bibinfo{person}{Luca Antiga}, \bibinfo{person}{Alban Desmaison}, \bibinfo{person}{Andreas K\"{o}pf}, \bibinfo{person}{Edward Yang}, \bibinfo{person}{Zach DeVito}, \bibinfo{person}{Martin Raison}, \bibinfo{person}{Alykhan Tejani}, \bibinfo{person}{Sasank Chilamkurthy}, \bibinfo{person}{Benoit Steiner}, \bibinfo{person}{Lu Fang}, \bibinfo{person}{Junjie Bai}, {and} \bibinfo{person}{Soumith Chintala}.} \bibinfo{year}{2019}\natexlab{}.
\newblock \bibinfo{booktitle}{\emph{PyTorch: an imperative style, high-performance deep learning library}}.
\newblock


\bibitem[Pearl(2009)]%
        {scm}
\bibfield{author}{\bibinfo{person}{Judea Pearl}.} \bibinfo{year}{2009}\natexlab{}.
\newblock \showarticletitle{{Causal inference in statistics: An overview}}.
\newblock \bibinfo{journal}{\emph{Statistics Surveys}} \bibinfo{volume}{3}, \bibinfo{number}{none} (\bibinfo{year}{2009}), \bibinfo{pages}{96 -- 146}.
\newblock


\bibitem[Penha and Hauff(2023)]%
        {ir3}
\bibfield{author}{\bibinfo{person}{Gustavo Penha} {and} \bibinfo{person}{Claudia Hauff}.} \bibinfo{year}{2023}\natexlab{}.
\newblock \showarticletitle{Do the Findings of Document and Passage Retrieval Generalize to the Retrieval of Responses for Dialogues?} \emph{(\bibinfo{series}{ECIR '23})}. \bibinfo{pages}{132–147}.
\newblock


\bibitem[Rajapakse et~al\mbox{.}(2024)]%
        {ir2}
\bibfield{author}{\bibinfo{person}{Thilina~Chaturanga Rajapakse}, \bibinfo{person}{Andrew Yates}, {and} \bibinfo{person}{Maarten de Rijke}.} \bibinfo{year}{2024}\natexlab{}.
\newblock \showarticletitle{Negative Sampling Techniques for Dense Passage Retrieval in a Multilingual Setting} \emph{(\bibinfo{series}{SIGIR '24})}. \bibinfo{pages}{575–584}.
\newblock


\bibitem[Reimers and Gurevych(2019)]%
        {sbert}
\bibfield{author}{\bibinfo{person}{Nils Reimers} {and} \bibinfo{person}{Iryna Gurevych}.} \bibinfo{year}{2019}\natexlab{}.
\newblock \showarticletitle{Sentence-{BERT}: Sentence Embeddings using {S}iamese {BERT}-Networks} \emph{(\bibinfo{series}{EMNLP '19})}. \bibinfo{pages}{3982–3992}.
\newblock


\bibitem[Rendle and Freudenthaler(2014)]%
        {aobpr}
\bibfield{author}{\bibinfo{person}{Steffen Rendle} {and} \bibinfo{person}{Christoph Freudenthaler}.} \bibinfo{year}{2014}\natexlab{}.
\newblock \showarticletitle{Improving pairwise learning for item recommendation from implicit feedback} \emph{(\bibinfo{series}{WSDM '14})}. \bibinfo{pages}{273–282}.
\newblock


\bibitem[Rendle et~al\mbox{.}(2009)]%
        {bpr}
\bibfield{author}{\bibinfo{person}{Steffen Rendle}, \bibinfo{person}{Christoph Freudenthaler}, \bibinfo{person}{Zeno Gantner}, {and} \bibinfo{person}{Lars Schmidt-Thieme}.} \bibinfo{year}{2009}\natexlab{}.
\newblock \showarticletitle{BPR: Bayesian personalized ranking from implicit feedback} \emph{(\bibinfo{series}{UAI '09})}. \bibinfo{pages}{452–461}.
\newblock


\bibitem[Shi et~al\mbox{.}(2023)]%
        {dnsmn}
\bibfield{author}{\bibinfo{person}{Wentao Shi}, \bibinfo{person}{Jiawei Chen}, \bibinfo{person}{Fuli Feng}, \bibinfo{person}{Jizhi Zhang}, \bibinfo{person}{Junkang Wu}, \bibinfo{person}{Chongming Gao}, {and} \bibinfo{person}{Xiangnan He}.} \bibinfo{year}{2023}\natexlab{}.
\newblock \showarticletitle{On the Theories Behind Hard Negative Sampling for Recommendation} \emph{(\bibinfo{series}{WWW '23})}. \bibinfo{pages}{812–822}.
\newblock


\bibitem[Si et~al\mbox{.}(2022)]%
        {explain1}
\bibfield{author}{\bibinfo{person}{Zihua Si}, \bibinfo{person}{Xueran Han}, \bibinfo{person}{Xiao Zhang}, \bibinfo{person}{Jun Xu}, \bibinfo{person}{Yue Yin}, \bibinfo{person}{Yang Song}, {and} \bibinfo{person}{Ji-Rong Wen}.} \bibinfo{year}{2022}\natexlab{}.
\newblock \showarticletitle{A Model-Agnostic Causal Learning Framework for Recommendation using Search Data} \emph{(\bibinfo{series}{WWW '22})}. \bibinfo{pages}{224--233}.
\newblock


\bibitem[Tao et~al\mbox{.}(2023)]%
        {slmrec}
\bibfield{author}{\bibinfo{person}{Zhulin Tao}, \bibinfo{person}{Xiaohao Liu}, \bibinfo{person}{Yewei Xia}, \bibinfo{person}{Xiang Wang}, \bibinfo{person}{Lifang Yang}, \bibinfo{person}{Xianglin Huang}, {and} \bibinfo{person}{Tat-Seng Chua}.} \bibinfo{year}{2023}\natexlab{}.
\newblock \showarticletitle{Self-Supervised Learning for Multimedia Recommendation}.
\newblock \bibinfo{journal}{\emph{IEEE Transactions on Multimedia}}  \bibinfo{volume}{25} (\bibinfo{year}{2023}), \bibinfo{pages}{5107--5116}.
\newblock


\bibitem[Wang and Liu(2021)]%
        {contrastive}
\bibfield{author}{\bibinfo{person}{Feng Wang} {and} \bibinfo{person}{Huaping Liu}.} \bibinfo{year}{2021}\natexlab{}.
\newblock \showarticletitle{Understanding the Behaviour of Contrastive Loss} \emph{(\bibinfo{series}{CVPR '21})}. \bibinfo{pages}{2495--2504}.
\newblock


\bibitem[Wang et~al\mbox{.}(2017)]%
        {irgan}
\bibfield{author}{\bibinfo{person}{Jun Wang}, \bibinfo{person}{Lantao Yu}, \bibinfo{person}{Weinan Zhang}, \bibinfo{person}{Yu Gong}, \bibinfo{person}{Yinghui Xu}, \bibinfo{person}{Benyou Wang}, \bibinfo{person}{Peng Zhang}, {and} \bibinfo{person}{Dell Zhang}.} \bibinfo{year}{2017}\natexlab{}.
\newblock \showarticletitle{IRGAN: A Minimax Game for Unifying Generative and Discriminative Information Retrieval Models} \emph{(\bibinfo{series}{SIGIR '17})}. \bibinfo{pages}{515–524}.
\newblock


\bibitem[Wang et~al\mbox{.}(2018)]%
        {gan1}
\bibfield{author}{\bibinfo{person}{Qinyong Wang}, \bibinfo{person}{Hongzhi Yin}, \bibinfo{person}{Zhiting Hu}, \bibinfo{person}{Defu Lian}, \bibinfo{person}{Hao Wang}, {and} \bibinfo{person}{Zi Huang}.} \bibinfo{year}{2018}\natexlab{}.
\newblock \showarticletitle{Neural Memory Streaming Recommender Networks with Adversarial Training} \emph{(\bibinfo{series}{KDD '18})}. \bibinfo{pages}{2467–2475}.
\newblock


\bibitem[Wang et~al\mbox{.}(2023)]%
        {nlp2}
\bibfield{author}{\bibinfo{person}{Tianqi Wang}, \bibinfo{person}{Lei Chen}, \bibinfo{person}{Xiaodan Zhu}, \bibinfo{person}{Younghun Lee}, {and} \bibinfo{person}{Jing Gao}.} \bibinfo{year}{2023}\natexlab{}.
\newblock \showarticletitle{Weighted Contrastive Learning With False Negative Control to Help Long-tailed Product Classification} \emph{(\bibinfo{series}{ACL '23})}. \bibinfo{pages}{6930--6941}.
\newblock


\bibitem[Wang et~al\mbox{.}(2021)]%
        {causal3}
\bibfield{author}{\bibinfo{person}{Wenjie Wang}, \bibinfo{person}{Fuli Feng}, \bibinfo{person}{Xiangnan He}, \bibinfo{person}{Xiang Wang}, {and} \bibinfo{person}{Tat-Seng Chua}.} \bibinfo{year}{2021}\natexlab{}.
\newblock \showarticletitle{Deconfounded Recommendation for Alleviating Bias Amplification} \emph{(\bibinfo{series}{KDD '21})}. \bibinfo{pages}{1717–1725}.
\newblock


\bibitem[Wang et~al\mbox{.}(2020)]%
        {rl1}
\bibfield{author}{\bibinfo{person}{Xiang Wang}, \bibinfo{person}{Yaokun Xu}, \bibinfo{person}{Xiangnan He}, \bibinfo{person}{Yixin Cao}, \bibinfo{person}{Meng Wang}, {and} \bibinfo{person}{Tat-Seng Chua}.} \bibinfo{year}{2020}\natexlab{}.
\newblock \showarticletitle{Reinforced Negative Sampling over Knowledge Graph for Recommendation} \emph{(\bibinfo{series}{WWW '20})}.
\newblock


\bibitem[Wei et~al\mbox{.}(2021a)]%
        {causal2}
\bibfield{author}{\bibinfo{person}{Tianxin Wei}, \bibinfo{person}{Fuli Feng}, \bibinfo{person}{Jiawei Chen}, \bibinfo{person}{Ziwei Wu}, \bibinfo{person}{Jinfeng Yi}, {and} \bibinfo{person}{Xiangnan He}.} \bibinfo{year}{2021}\natexlab{a}.
\newblock \showarticletitle{Model-Agnostic Counterfactual Reasoning for Eliminating Popularity Bias in Recommender System} \emph{(\bibinfo{series}{KDD '21})}. \bibinfo{pages}{1791–1800}.
\newblock


\bibitem[Wei et~al\mbox{.}(2021b)]%
        {macr}
\bibfield{author}{\bibinfo{person}{Tianxin Wei}, \bibinfo{person}{Fuli Feng}, \bibinfo{person}{Jiawei Chen}, \bibinfo{person}{Ziwei Wu}, \bibinfo{person}{Jinfeng Yi}, {and} \bibinfo{person}{Xiangnan He}.} \bibinfo{year}{2021}\natexlab{b}.
\newblock \showarticletitle{Model-Agnostic Counterfactual Reasoning for Eliminating Popularity Bias in Recommender System} \emph{(\bibinfo{series}{KDD'21})}. \bibinfo{pages}{1791–1800}.
\newblock


\bibitem[Wei et~al\mbox{.}(2020)]%
        {gnn3}
\bibfield{author}{\bibinfo{person}{Yinwei Wei}, \bibinfo{person}{Xiang Wang}, \bibinfo{person}{Liqiang Nie}, \bibinfo{person}{Xiangnan He}, {and} \bibinfo{person}{Tat-Seng Chua}.} \bibinfo{year}{2020}\natexlab{}.
\newblock \showarticletitle{Graph-Refined Convolutional Network for Multimedia Recommendation with Implicit Feedback} \emph{(\bibinfo{series}{MM '20})}. \bibinfo{pages}{3541–3549}.
\newblock


\bibitem[Wei et~al\mbox{.}(2019)]%
        {gnn4}
\bibfield{author}{\bibinfo{person}{Yinwei Wei}, \bibinfo{person}{Xiang Wang}, \bibinfo{person}{Liqiang Nie}, \bibinfo{person}{Xiangnan He}, \bibinfo{person}{Richang Hong}, {and} \bibinfo{person}{Tat-Seng Chua}.} \bibinfo{year}{2019}\natexlab{}.
\newblock \showarticletitle{MMGCN: Multi-modal Graph Convolution Network for Personalized Recommendation of Micro-video} \emph{(\bibinfo{series}{MM '19})}. \bibinfo{pages}{1437–1445}.
\newblock


\bibitem[Wu et~al\mbox{.}(2021)]%
        {neg_proof2}
\bibfield{author}{\bibinfo{person}{Mike Wu}, \bibinfo{person}{Milan Mosse}, \bibinfo{person}{Chengxu Zhuang}, \bibinfo{person}{Daniel Yamins}, {and} \bibinfo{person}{Noah Goodman}.} \bibinfo{year}{2021}\natexlab{}.
\newblock \showarticletitle{Conditional Negative Sampling for Contrastive Learning of Visual Representations} \emph{(\bibinfo{series}{ICLR'21})}.
\newblock


\bibitem[Wu et~al\mbox{.}(2024)]%
        {spur1}
\bibfield{author}{\bibinfo{person}{Songli Wu}, \bibinfo{person}{Liang Du}, \bibinfo{person}{Jia-Qi Yang}, \bibinfo{person}{Yuai Wang}, \bibinfo{person}{De-Chuan Zhan}, \bibinfo{person}{SHUANG ZHAO}, {and} \bibinfo{person}{Zixun Sun}.} \bibinfo{year}{2024}\natexlab{}.
\newblock \showarticletitle{{RE}-{SORT}: Removing Spurious Correlation in Multilevel Interaction for {CTR} Prediction}. In \bibinfo{booktitle}{\emph{The 40th Conference on Uncertainty in Artificial Intelligence}} \emph{(\bibinfo{series}{UAI'24})}. Article \bibinfo{articleno}{178}, \bibinfo{numpages}{13}~pages.
\newblock


\bibitem[Xu et~al\mbox{.}(2024)]%
        {causalrs}
\bibfield{author}{\bibinfo{person}{Shuyuan Xu}, \bibinfo{person}{Da Xu}, \bibinfo{person}{Evren Korpeoglu}, \bibinfo{person}{Sushant Kumar}, \bibinfo{person}{Stephen Guo}, \bibinfo{person}{Kannan Achan}, {and} \bibinfo{person}{Yongfeng Zhang}.} \bibinfo{year}{2024}\natexlab{}.
\newblock \showarticletitle{Causal Structure Learning for Recommender System}.
\newblock \bibinfo{journal}{\emph{ACM Trans. Recomm. Syst.}} \bibinfo{volume}{3}, \bibinfo{number}{1}, Article \bibinfo{articleno}{8} (\bibinfo{date}{Oct.} \bibinfo{year}{2024}), \bibinfo{numpages}{23}~pages.
\newblock


\bibitem[Xun et~al\mbox{.}(2021)]%
        {mcf2}
\bibfield{author}{\bibinfo{person}{Jiahao Xun}, \bibinfo{person}{Shengyu Zhang}, \bibinfo{person}{Zhou Zhao}, \bibinfo{person}{Jieming Zhu}, \bibinfo{person}{Qi Zhang}, \bibinfo{person}{Jingjie Li}, \bibinfo{person}{Xiuqiang He}, \bibinfo{person}{Xiaofei He}, \bibinfo{person}{Tat-Seng Chua}, {and} \bibinfo{person}{Fei Wu}.} \bibinfo{year}{2021}\natexlab{}.
\newblock \showarticletitle{Why Do We Click: Visual Impression-aware News Recommendation} \emph{(\bibinfo{series}{MM '21})}. \bibinfo{pages}{3881–3890}.
\newblock


\bibitem[Yang et~al\mbox{.}(2024)]%
        {survey}
\bibfield{author}{\bibinfo{person}{Zhen Yang}, \bibinfo{person}{Ming Ding}, \bibinfo{person}{Tinglin Huang}, \bibinfo{person}{Yukuo Cen}, \bibinfo{person}{Junshuai Song}, \bibinfo{person}{Bin Xu}, \bibinfo{person}{Yuxiao Dong}, {and} \bibinfo{person}{Jie Tang}.} \bibinfo{year}{2024}\natexlab{}.
\newblock \showarticletitle{Does Negative Sampling Matter? a Review With Insights Into its Theory and Applications}.
\newblock \bibinfo{journal}{\emph{IEEE Transactions on Pattern Analysis and Machine Intelligence}} \bibinfo{volume}{46}, \bibinfo{number}{8} (\bibinfo{year}{2024}), \bibinfo{pages}{5692--5711}.
\newblock


\bibitem[Yao et~al\mbox{.}(2024)]%
        {minicpmv}
\bibfield{author}{\bibinfo{person}{Yuan Yao}, \bibinfo{person}{Tianyu Yu}, \bibinfo{person}{Ao Zhang}, {and} \bibinfo{person}{et al.}} \bibinfo{year}{2024}\natexlab{}.
\newblock \bibinfo{title}{MiniCPM-V: A GPT-4V Level MLLM on Your Phone}.
\newblock
\showeprint[arxiv]{2408.01800}~[cs.CV]


\bibitem[Ying et~al\mbox{.}(2018)]%
        {pinsage}
\bibfield{author}{\bibinfo{person}{Rex Ying}, \bibinfo{person}{Ruining He}, \bibinfo{person}{Kaifeng Chen}, \bibinfo{person}{Pong Eksombatchai}, \bibinfo{person}{William~L. Hamilton}, {and} \bibinfo{person}{Jure Leskovec}.} \bibinfo{year}{2018}\natexlab{}.
\newblock \showarticletitle{Graph Convolutional Neural Networks for Web-Scale Recommender Systems} \emph{(\bibinfo{series}{KDD '18})}. \bibinfo{pages}{974–983}.
\newblock


\bibitem[Yu et~al\mbox{.}(2024)]%
        {recsys}
\bibfield{author}{\bibinfo{person}{Junliang Yu}, \bibinfo{person}{Hongzhi Yin}, \bibinfo{person}{Xin Xia}, \bibinfo{person}{Tong Chen}, \bibinfo{person}{Jundong Li}, {and} \bibinfo{person}{Zi Huang}.} \bibinfo{year}{2024}\natexlab{}.
\newblock \showarticletitle{Self-Supervised Learning for Recommender Systems: A Survey}.
\newblock \bibinfo{journal}{\emph{IEEE Transactions on Knowledge and Data Engineering}} \bibinfo{volume}{36}, \bibinfo{number}{1} (\bibinfo{year}{2024}), \bibinfo{pages}{335--355}.
\newblock


\bibitem[Zhang et~al\mbox{.}(2023)]%
        {cv1}
\bibfield{author}{\bibinfo{person}{Le Zhang}, \bibinfo{person}{Rabiul Awal}, {and} \bibinfo{person}{Aishwarya Agrawal}.} \bibinfo{year}{2023}\natexlab{}.
\newblock \showarticletitle{Contrasting Intra-Modal and Ranking Cross-Modal Hard Negatives to Enhance Visio-Linguistic Fine-grained Understanding}.
\newblock \bibinfo{journal}{\emph{arXiv preprint arXiv:2306.08832}} (\bibinfo{year}{2023}).
\newblock


\bibitem[Zhang et~al\mbox{.}(2020)]%
        {cv3}
\bibfield{author}{\bibinfo{person}{Shifeng Zhang}, \bibinfo{person}{Cheng Chi}, \bibinfo{person}{Yongqiang Yao}, \bibinfo{person}{Zhen Lei}, {and} \bibinfo{person}{Stan~Z. Li}.} \bibinfo{year}{2020}\natexlab{}.
\newblock \showarticletitle{Bridging the Gap Between Anchor-Based and Anchor-Free Detection via Adaptive Training Sample Selection} \emph{(\bibinfo{series}{CVPR '20})}. \bibinfo{pages}{9756--9765}.
\newblock


\bibitem[Zhang et~al\mbox{.}(2013)]%
        {dns}
\bibfield{author}{\bibinfo{person}{Weinan Zhang}, \bibinfo{person}{Tianqi Chen}, \bibinfo{person}{Jun Wang}, {and} \bibinfo{person}{Yong Yu}.} \bibinfo{year}{2013}\natexlab{}.
\newblock \showarticletitle{Optimizing top-n collaborative filtering via dynamic negative item sampling} \emph{(\bibinfo{series}{SIGIR '13})}. \bibinfo{pages}{785–788}.
\newblock


\bibitem[Zhang et~al\mbox{.}(2022)]%
        {nlp3}
\bibfield{author}{\bibinfo{person}{Yanzhao Zhang}, \bibinfo{person}{Richong Zhang}, \bibinfo{person}{Samuel Mensah}, \bibinfo{person}{Xudong Liu}, {and} \bibinfo{person}{Yongyi Mao}.} \bibinfo{year}{2022}\natexlab{}.
\newblock \showarticletitle{Unsupervised Sentence Representation via Contrastive Learning with Mixing Negatives}.
\newblock \bibinfo{journal}{\emph{Proceedings of the AAAI Conference on Artificial Intelligence}} \bibinfo{volume}{36}, \bibinfo{number}{10} (\bibinfo{date}{Jun.} \bibinfo{year}{2022}), \bibinfo{pages}{11730--11738}.
\newblock


\bibitem[Zhao et~al\mbox{.}(2014)]%
        {sn2}
\bibfield{author}{\bibinfo{person}{Tong Zhao}, \bibinfo{person}{Julian McAuley}, {and} \bibinfo{person}{Irwin King}.} \bibinfo{year}{2014}\natexlab{}.
\newblock \showarticletitle{Leveraging Social Connections to Improve Personalized Ranking for Collaborative Filtering} \emph{(\bibinfo{series}{CIKM '14})}. \bibinfo{pages}{261–270}.
\newblock


\bibitem[Zheng et~al\mbox{.}(2022)]%
        {explain2}
\bibfield{author}{\bibinfo{person}{Yu Zheng}, \bibinfo{person}{Chen Gao}, \bibinfo{person}{Jianxin Chang}, \bibinfo{person}{Yanan Niu}, \bibinfo{person}{Yang Song}, \bibinfo{person}{Depeng Jin}, {and} \bibinfo{person}{Yong Li}.} \bibinfo{year}{2022}\natexlab{}.
\newblock \showarticletitle{Disentangling Long and Short-Term Interests for Recommendation} \emph{(\bibinfo{series}{WWW '22})}. \bibinfo{pages}{2256–2267}.
\newblock


\bibitem[Zhou et~al\mbox{.}(2023a)]%
        {dragon}
\bibfield{author}{\bibinfo{person}{Hongyu Zhou}, \bibinfo{person}{Xin Zhou}, {and} \bibinfo{person}{Zhiqi Shen}.} \bibinfo{year}{2023}\natexlab{a}.
\newblock \showarticletitle{Enhancing Dyadic Relations with Homogeneous Graphs for Multimodal Recommendation}.
\newblock \bibinfo{journal}{\emph{ArXiv}}  \bibinfo{volume}{abs/2301.12097} (\bibinfo{year}{2023}).
\newblock


\bibitem[Zhou and Shen(2023)]%
        {freedom}
\bibfield{author}{\bibinfo{person}{Xin Zhou} {and} \bibinfo{person}{Zhiqi Shen}.} \bibinfo{year}{2023}\natexlab{}.
\newblock \showarticletitle{A Tale of Two Graphs: Freezing and Denoising Graph Structures for Multimodal Recommendation} \emph{(\bibinfo{series}{MM '23})}. \bibinfo{pages}{935–943}.
\newblock


\bibitem[Zhou et~al\mbox{.}(2023b)]%
        {eval1}
\bibfield{author}{\bibinfo{person}{Xin Zhou}, \bibinfo{person}{Hongyu Zhou}, \bibinfo{person}{Yong Liu}, \bibinfo{person}{Zhiwei Zeng}, \bibinfo{person}{Chunyan Miao}, \bibinfo{person}{Pengwei Wang}, \bibinfo{person}{Yuan You}, {and} \bibinfo{person}{Feijun Jiang}.} \bibinfo{year}{2023}\natexlab{b}.
\newblock \showarticletitle{Bootstrap Latent Representations for Multi-modal Recommendation} \emph{(\bibinfo{series}{WWW '23})}. \bibinfo{pages}{845–854}.
\newblock


\bibitem[Zhou et~al\mbox{.}(2023c)]%
        {bm3}
\bibfield{author}{\bibinfo{person}{Xin Zhou}, \bibinfo{person}{Hongyu Zhou}, \bibinfo{person}{Yong Liu}, \bibinfo{person}{Zhiwei Zeng}, \bibinfo{person}{Chunyan Miao}, \bibinfo{person}{Pengwei Wang}, \bibinfo{person}{Yuan You}, {and} \bibinfo{person}{Feijun Jiang}.} \bibinfo{year}{2023}\natexlab{c}.
\newblock \showarticletitle{Bootstrap Latent Representations for Multi-Modal Recommendation} \emph{(\bibinfo{series}{WWW '23})}. \bibinfo{pages}{845–854}.
\newblock


\bibitem[Zhu et~al\mbox{.}(2025)]%
        {mllm}
\bibfield{author}{\bibinfo{person}{Lanyun Zhu}, \bibinfo{person}{Deyi Ji}, \bibinfo{person}{Tianrun Chen}, \bibinfo{person}{Haiyang Wu}, \bibinfo{person}{De~Wen Soh}, {and} \bibinfo{person}{Jun Liu}.} \bibinfo{year}{2025}\natexlab{}.
\newblock \showarticletitle{{CPCF}: A Cross-Prompt Contrastive Framework for Referring Multimodal Large Language Models}. In \bibinfo{booktitle}{\emph{Forty-second International Conference on Machine Learning}}.
\newblock
\urldef\tempurl%
\url{https://openreview.net/forum?id=0MpGi6IwZr}
\showURL{%
\tempurl}


\bibitem[Zhu et~al\mbox{.}(2021)]%
        {pop1}
\bibfield{author}{\bibinfo{person}{Ziwei Zhu}, \bibinfo{person}{Yun He}, \bibinfo{person}{Xing Zhao}, {and} \bibinfo{person}{James Caverlee}.} \bibinfo{year}{2021}\natexlab{}.
\newblock \showarticletitle{Popularity Bias in Dynamic Recommendation} \emph{(\bibinfo{series}{KDD '21})}. \bibinfo{pages}{2439–2449}.
\newblock


\bibitem[Zong et~al\mbox{.}(2024)]%
        {imbalance}
\bibfield{author}{\bibinfo{person}{Daoming Zong}, \bibinfo{person}{Chaoyue Ding}, \bibinfo{person}{Baoxiang Li}, \bibinfo{person}{Jiakui Li}, {and} \bibinfo{person}{Ken Zheng}.} \bibinfo{year}{2024}\natexlab{}.
\newblock \showarticletitle{Balancing Multimodal Learning via Online Logit Modulation}. \bibinfo{pages}{5753--5761}.
\newblock


\end{thebibliography}


\end{document}